\title{Dominating Set Knapsack: Profit Optimization on Dominating Sets} 
\author{Sipra Singh}{Indian Institute of Technology Kharagpur, India }{sipra.singh@iitkgp.ac.in}{}{}
\authorrunning{S. Singh} 
\keywords{Dominating Set, Knapsack Problem, Parameterized algorithm} 
\newdimen\prevdp
\def\leftlabel#1{\noalign{\prevdp=\prevdepth
   \kern-\prevdp\nointerlineskip\vbox to0pt{\vss\hbox{\ensuremath{#1}}}\kern\prevdp}}
\newcommand{\Pc}{\ensuremath{\mathsf{P}}\xspace}
\newcommand{\NP}{\ensuremath{\mathsf{NP}}\xspace}
\newcommand{\NPC}{\ensuremath{\mathsf{NP}}\text{-complete}\xspace}
\newcommand{\wh}{{\ensuremath{\mathsf{W}}\sc{-hierarchy}}\xspace}
\newcommand{\WO}{\ensuremath{\mathsf{W[1]}}\xspace}
\newcommand{\WTw}{\ensuremath{\mathsf{W[2]}}\xspace}
\newcommand{\WTh}{\ensuremath{\mathsf{W[3]}}\xspace}
\newcommand{\FPT}{\ensuremath{\mathsf{FPT}}\xspace}
\newcommand{\ETH}{\ensuremath{\mathsf{ETH}}\xspace}
\newcommand{\ExpoTH}{{\sc Exponential Time Hypothesis}\xspace}
\newcommand{\SETH}{\ensuremath{\mathsf{SETH}}\xspace}
\newcommand{\wcs}{{\sc Weighted Circuit Satisfiability}\xspace}
\newcommand{\WCS}{\ensuremath{\mathsf{WCS}}\xspace}
\newcommand{\PTAS}{\ensuremath{\mathsf{PTAS}}\xspace}
\let\oldlambda\lambda
\renewcommand{\lambda}{\ensuremath{\oldlambda}\xspace}
\let\oldalpha\alpha
\renewcommand{\alpha}{\ensuremath{\oldalpha}\xspace}
\let\oldDelta\Delta
\renewcommand{\Delta}{\ensuremath{\oldDelta}\xspace}
\newcommand{\YES}{{\sc yes}\xspace}
\newcommand{\NO}{{\sc no}\xspace}
\newcommand{\yes}{{\sc yes}\xspace}
\newcommand{\no}{{\sc no}\xspace}
\newcommand{\true}{\text{{\sc true}}\xspace}
\newcommand{\kp}{\text{\sc Knapsack Problem}\xspace}
\newcommand{\indset}{\text{\sc Independent Set}\xspace}
\newcommand{\vc}{\text{\sc Vertex Cover}\xspace}
\newcommand{\vcknapsack}{{\sc Vertex Cover Knapsack}\xspace}
\newcommand{\dos}{\text{\sc Dominating Set}\xspace}
\newcommand{\udos}{{\sc Upper Dominating Set}\xspace}
\newcommand{\dosknapsack}{{\sc Dominating Set Knapsack}\xspace}
\newcommand{\kdosknapsack}{{\sc \ensuremath{k}-Dominating Set Knapsack}\xspace}
\newcommand{\minimaldosknapsack}{{\sc Minimal Dominating Set Knapsack}\xspace}
\newcommand{\td}{{\sc Tree decomposition}\xspace}
\newcommand{\ntd}{{\sc Nice tree decomposition}\xspace}
\newcommand{\tw}{{\sc Treewidth}\xspace}
\newcommand{\BB}{\ensuremath{\mathcal B}\xspace}
\newcommand{\CC}{\ensuremath{\mathcal C}\xspace}
\newcommand{\DD}{\ensuremath{\mathcal D}\xspace}
\newcommand{\EE}{\ensuremath{\mathcal E}\xspace}
\newcommand{\FF}{\ensuremath{\mathcal F}\xspace}
\newcommand{\GG}{\ensuremath{\mathcal G}\xspace}
\newcommand{\HH}{\ensuremath{\mathcal H}\xspace}
\newcommand{\II}{\ensuremath{\mathcal I}\xspace}
\newcommand{\OO}{\ensuremath{\mathcal O}\xspace}
\renewcommand{\SS}{\ensuremath{\mathcal S}\xspace}
\newcommand{\TT}{\ensuremath{\mathcal T}\xspace}
\newcommand{\UU}{\ensuremath{\mathcal U}\xspace}
\newcommand{\VV}{\ensuremath{\mathcal V}\xspace}
\newcommand{\WW}{\ensuremath{\mathcal W}\xspace}
\newcommand{\XX}{\ensuremath{\mathcal X}\xspace}
\newcommand{\eps}{\ensuremath{\varepsilon}\xspace}
\renewcommand{\epsilon}{\eps}
\newcommand{\ignore}[1]{}
\newcommand{\pr}{\ensuremath{\prime}}
\renewcommand{\leq}{\leqslant}
\renewcommand{\geq}{\geqslant}
\renewcommand{\ge}{\geqslant}
\renewcommand{\le}{\leqslant}
\crefname{theorem}{Theorem}{\bf Theorems}
\crefname{observation}{Observation}{\bf Observations}
\crefname{lemma}{Lemma}{\bf Lemmata}
\crefname{corollary}{Corollary}{\bf Corollaries}
\crefname{proposition}{Proposition}{\bf Propositions}
\crefname{definition}{Definition}{\bf Definitions}
\crefname{claim}{Claim}{\bf Claims}
\crefname{reductionrule}{Reduction rule}{\bf Reduction rules}
\algnewcommand\algorithmicinput{\textbf{Input:}}
\algnewcommand\INPUT{\item[\algorithmicinput]}
\algnewcommand\algorithmicoutput{\textbf{Output:}}
\algnewcommand\OUTPUT{\item[\algorithmicoutput]}
\algnewcommand{\LineComment}[1]{\State \(\triangleright\) #1}
\newenvironment{breakablealgorithm}
{
		\begin{center}
			
			\refstepcounter{algorithm}
			\hrule height.8pt depth0pt \kern2pt
			\renewcommand{\caption}[2][\relax]{
				{\raggedright\textbf{\fname@algorithm~\thealgorithm} ##2\par}%
				\ifx\relax##1\relax 
				\addcontentsline{loa}{algorithm}{\protect\numberline{\thealgorithm}##2}%
				\else 
				\addcontentsline{loa}{algorithm}{\protect\numberline{\thealgorithm}##1}%
				\fi
				\kern2pt\hrule\kern2pt
			}
		}{
		\kern2pt\hrule\relax
	\end{center}
}
\crefname{theorem}{Theorem}{\bf Theorems}
\crefname{observation}{Observation}{\bf Observations}
\crefname{lemma}{Lemma}{\bf Lemmata}
\crefname{corollary}{Corollary}{\bf Corollaries}
\crefname{proposition}{Proposition}{\bf Propositions}
\crefname{definition}{Definition}{\bf Definitions}
\crefname{claim}{Claim}{\bf Claims}
\crefname{reductionrule}{Reduction rule}{\bf Reduction rules}
\begin{document}
	
	\maketitle
	
\begin{abstract}
In a large-scale network, we want to choose some influential nodes to make a profit by paying some cost within a limited budget so that we do not have to spend more budget on some nodes adjacent to the chosen nodes; our problem is the graph-theoretic representation of it. We define our problem, \dosknapsack, by attaching the \kp with the \dos on graphs. Each vertex $v~(\in \VV) $ is associated with a cost factor $w(v)$ and a profit amount $\alpha(v)$. We aim to choose some vertices within a fixed budget $(s)$ that give maximum profit so that we do not need to choose their 1-hop neighbors. We show that the \dosknapsack problem is strongly \NPC even when restricted to bipartite graphs, but weakly \NPC for star graphs. We present a pseudo-polynomial time algorithm for trees in time $\OO(n\cdot \min\{s^2, (\alpha(\VV))^2\})$. We show that \dosknapsack is unlikely to be \textit{fixed parameter tractable} by proving that it is \WTw-hard parameterized by the solution size. We developed \FPT algorithms with running time $\OO(4^{tw}\cdot n^{\OO(1)} \min\{s^2,{\alpha(\VV)}^2\})$ and $\OO(2^{vck-1}\cdot n^{\OO(1)} \min\{s^2,{\alpha(\VV)}^2\})$, where $tw$ represents the \tw of the given graph $\GG(\VV,\EE)$, $vck$ is the solution size of the \vcknapsack, $s$ is the capacity or size of the knapsack and $\alpha(\VV)=\sum_{v\in\VV}\alpha(v)$. We obtained similar results for other variants \kdosknapsack and \minimaldosknapsack, where $k$ is the size of the \dos.

\keywords{Dominating Set \and Knapsack Problem \and Parameterized algorithm}
\end{abstract}
%
%
%


\section{Introduction}\label{sec:intro}
The \kp is a fundamental problem widely studied in Computer Science. In this problem, there are $n$ items and a knapsack with a fixed capacity. Each of these items has a weight and a profit value. This problem aims to put some items in the knapsack that give maximum profit so that the total weight of the items does not exceed the knapsack capacity. R Bellman~\cite{bellman1954some} first introduced this problem in 1954. The problem was proven to be \NPC by R.M.Karp in 1972~\cite{karp2009reducibility}. Later on, much more work has been done on the generalized version of this problem. ~\cite{martello1990knapsack,kellerer2004multidimensional,CacchianiILM22,cacchiani2022knapsack}.

In our problem, we include a graph constraint on \kp, where we represent each item as a vertex of a graph, and each vertex has weight and profit values. Based on a graph constraint, we choose some vertices that give the maximum profit within the fixed capacity of the knapsack. There are some existing variants of the \kp with graph constraints. Pferschy and Schauer showed  \NP-hardness and approximation results for some special graph classes for the \kp with the independent set as graph property~\cite{PferschyS09}. Dey et al. studied \kp with graph connectivity, path, shortest path~\cite{dey2024knapsack}, and \vc~\cite{dey2024knapsackwith}.

This paper considers another graph property as a constraint, the \dos~\cite{DBLP:conf/stoc/GareyJS74}, with the \kp. A \dos of an undirected graph is a set of vertices such that all other vertices are adjacent to at least one element of that set. We use the name \dosknapsack for our problem, which is formally defined in Section~\ref{sec:prob-def}. As a brief introduction, suppose $\GG(\VV,\EE)$ is a graph with each vertex $v\in\VV$ having weight $w(v)$ and profit $\alpha(v)$. The positive numbers $s$ and $d$ are the knapsack capacity and target profit, respectively. We want to find $\WW\subseteq\VV$ such that (I) $\WW$ is a \dos (II) $\sum_{v\in \WW}w(v)\leq s$ and (III) $\sum_{v\in \WW}\alpha(v)\geq d$.

Towards the motivation: \dosknapsack is a problem that combines two well-known problems. One is \kp on graphs and the other is \dos. Now, one can argue on the natural question that if, for a graph instance, both \dos and \kp are \yes instances, then obviously \dosknapsack is a \yes instance. However, it is not true in general. Consider the following instances of graph classes in Figure~\ref{star} and Figure~\ref{split}. We refer the reader to take a look at Definition~\ref{def:doskp} of \dosknapsack for a better understanding of the following examples.\\
\noindent%
\begin{minipage}[t]{.47\textwidth}
	\begin{figure}[H]
		\centering
		\includegraphics[height=2.9cm,width=2.9cm]{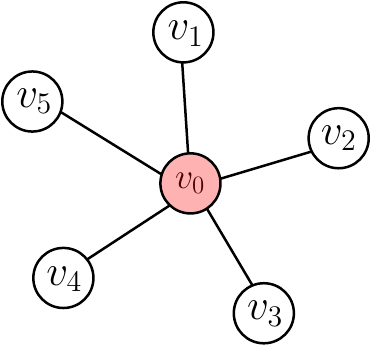}
		\caption{The instance of a star graph $\GG_1(\VV_1,\EE_1)$ with $\VV_1=\{v_0,\ldots, v_5\}$. $w(v_i)=\alpha(v_i)=1~ \forall i= 1, \ldots 5$ and $w(v_0)=\alpha(v_0)=5$ and $s=d=4$.}
		\label{star}
	\end{figure}
\end{minipage}%
\raisebox{-.22\textheight}{\rule{0.5pt}{.23\textheight}}
\hfill
\begin{minipage}[t]{.5\textwidth}
	\begin{figure}[H]
		\centering
		\includegraphics[height=2.4cm,width=4cm]{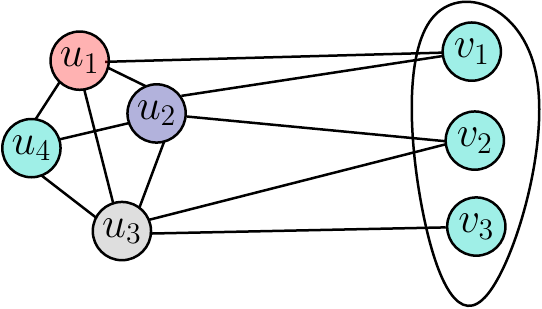}
		\caption{The instance of a split graph $\GG_2(\VV_2,\EE_2)$ with $\VV_2=\{u_i:1\leq i\leq 4\}\cup \{v_j:1\leq j\leq 3\}$. $w(u_i)=4, \alpha(u_i)=0~ \forall i= 1, \ldots 4$ and $w(v_j)=\alpha(v_j)=1~ \forall j= 1, \ldots 3$ and $s=d=3$. }
		\label{split}
	\end{figure}
	
\end{minipage}%

For both of these graph instances, \dos as well as \kp are \yes instances, but \dosknapsack is not. Moreover, this holds for trees and also for the perfect graph classes, as the star graphs (or stars), which are trees with a specific structure, and the split graphs, which represent the perfect graph class, as illustrated in the above graph instances.

Consider two other graph classes: one is a bipartite graph, and the other is a regular graph. Also, for the following instances in Figure~\ref{bipartite} and Figure~\ref{peterson}, \dosknapsack is \no instance even for the \yes instances of both \dos and \kp.\\
\noindent%
\begin{minipage}[t]{.47\textwidth}
	\begin{figure}[H]
		\centering
		\includegraphics[height=2.5cm,width=4cm]{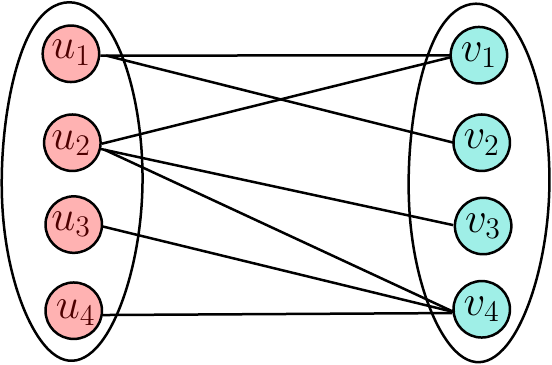}
		\newline
		\newline
		\caption{The instance of a bipartite graph $\GG_3(\VV_3,\EE_3)$ with $\VV_3=\{u_i:1\leq i\leq 4\}\cup \{v_j:1\leq j\leq 4\}$. $w(u_i)=4, \alpha(u_i)=0~ \forall~ i= 1, \ldots 4$ and $w(v_j)=\alpha(v_j)=1~ \forall~ j= 1, \ldots 3$ and $s=d=3$.}
		\label{bipartite}
	\end{figure}
\end{minipage}%
\raisebox{-.25\textheight}{\rule{0.5pt}{.23\textheight}}
\hfill
\begin{minipage}[t]{.5\textwidth}
	\begin{figure}[H]
		\centering
		\includegraphics[height=3cm,width=3.3cm]{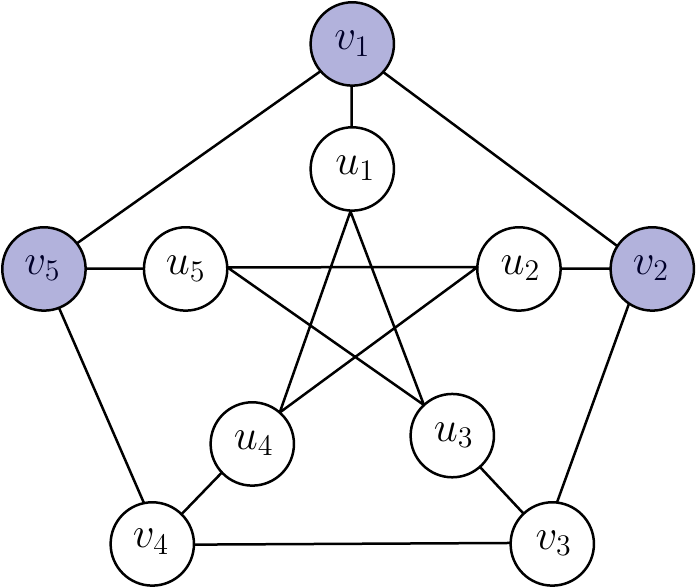}
		\caption{The instance of a regular graph $\GG_4(\VV_4,\EE_4)$ with $\VV_4=\{u_i:1\leq i\leq 5\}\cup \{v_j:1\leq j\leq 5\}$. $w(u_i)=w(v_j)=4, \alpha(u_i)=\alpha(v_j)=0~ \forall i= 1, \ldots 5, j=4, 5$ and $w(v_j)=\alpha(v_j)=1~ \forall j= 1, \ldots 3$ and $s=d=3$. }
		\label{peterson}
	\end{figure}
	
\end{minipage}%

Thus, it shows the level of difficulty of \dosknapsack compared with \kp and \dos. So, it is very clear from the above graph class instances that the \yes instance is not obvious for \dosknapsack even though both \kp and \dos are \yes instances. 

Now, consider a real-world scenario as a representation of our problem. 
Nowadays, we are all surrounded by different kinds of social networking platforms. These platforms are not only for acquiring knowledge or communicating with the rest of the world, but also a medium of business. One person is connected with many other people through this network. Some people broadcast different topics on their social networking channels, and based on their publicity, they earn some subscribers. In this scenario, suppose a commercial company wants to advertise on some of these broadcasting channels to promote their product, so that their advertisement reaches the most users in a cost-efficient manner. For this purpose, they pay a certain price to those channels for playing their advertisement, based on the number of subscriber or the length of their advertisement. Now, we can represent this problem with respect to the company by a graph-theoretic modeling, where each user (channel owner or subscriber) represents a vertex of the network and is connected via an edge to a subscriber. However, there is no direct edge between two subscribers. The goal of this company is to bring product information to as many users' knowledge as possible within their fixed budget capacity. We can represent this above-mentioned realistic problem in the form of \dosknapsack.

Consider another natural example: Suppose the government wants to optimally build power stations in different localities of a city so that each consumer of every locality can get an electric supply. Here, we consider each locality as the vertex. The power station of each locality has a finite range within which it can distribute electricity. If two such regions overlap, we draw an edge between those localities. Building a power station with different capacities, availability of places in a locality, etc., are the cost factors for this problem. The revenue generated by the total unit of power distributed to a locality to provide electricity in different houses and factories is the profit of this problem. Our problem \dosknapsack represents such a real-world example in the graph theoretic approach. Here, we want to find a subset of vertices to distribute power supply to the entire city and make optimal profit. Our problem can be widely applicable in different network problems like wireless ad-hoc networks~\cite{markarian2012degree}, railway transportation networks~\cite{weihe1998covering} (where stations are represented as the vertices, the connection between two stations through the railway tracks is denoted as edges, and we want to find a subset of stations to pass express trains), etc.

In Section~\ref{Sec:Cont}, we summarize our main contributions. We discussed the literature related to our work in Section~\ref{sec:r-work}. Some preliminary definitions used in other sections are given in Section~\ref{sec:prelim}. Our problem statement is defined in Section~\ref{sec:prob-def}. The results we proved on \NP-hardness are discussed in Section~\ref{Sec: Res-C-NPH}. In Section~\ref{Se: Res-Param-Comp}, we show the parameterized complexity results for our problem \dosknapsack. Finally, we conclude in Section~\ref{sec:conclusion}.


\section{Our Contributions}\label{Sec:Cont}
The main contributions of this paper are on Classical \NP-Hardness and Parameterized Complexity.\\

	\noindent
	\ding{118} \textbf{Classical \NP-Hardness Results:}\\
	
	We prove \dosknapsack and \kdosknapsack are strongly \NPC for general graphs, even for general bipartite graphs. Also, \minimaldosknapsack is strongly \NPC for general graphs. We show that though the star graph is a special case of a bipartite graph, \dosknapsack and \kdosknapsack are weakly \NPC for the star graphs. We provide a dynamic programming-based algorithm that solves \dosknapsack for trees in Pseudo-polynomial time. The running time of this algorithm is $\OO(n \cdot min\{s^2,(\alpha(V))^2\})$, where $s$ is the size of the knapsack and $\alpha(V)=\sum_{v\in\VV}\alpha(v)$.\\
	
	\noindent
	\ding{118} \textbf{Parameterized Complexity Results:}\\
	\begin{itemize}
		
		\item[$\bullet$] \textbf{Solution size:} We prove \dosknapsack is \WTw-hard by the reduction made from the parameterized version of \dos. By reducing the parameterized version of \dosknapsack to \wcs(\WCS), we prove that \dosknapsack belongs to \WTh. Thus, it is \WTh-complete. We show that unless \ExpoTH (\ETH) fails, there is no $f(k)n^{o(k)}$ -time algorithm parameterized by the solution size $k$. Similar results occur for \kdosknapsack.\\
		
		\item[$\bullet$] \textbf{Solution size of \vcknapsack:} \dosknapsack is Fixed-Parameter Tractable (\FPT) parameterized by the size of \vcknapsack solution. We prove this theorem by using the dynamic programming algorithm described in Theorem~\ref{thm:doskp-trees-pseudo-poly}. The running time of the algorithm is $\OO(2^{vck-1}\cdot n^{\OO(1)} \min\{s^2,{\alpha(\VV)}^2\})$, where $vck$ is the solution size of the \vcknapsack, $s$ is the size of the knapsack and $\alpha(\VV)=\sum_{v\in\VV}\alpha(v)$.\\
		
		\item[$\bullet$] \textbf{Treewidth:} By considering \tw as the parameter, we prove \dosknapsack is \FPT. We develop an algorithm inspired by the approach discussed in ~\cite{alber2002fixed}. The running time of the algorithm is $\OO(4^{tw}\cdot n^{\OO(1)} \min\{s^2,{\alpha(\VV)}^2\})$, where $tw$ represents the \tw of the given graph.The similar result occurs for \kdosknapsack and \minimaldosknapsack\\
		
	\end{itemize}

\section{Related Works}\label{sec:r-work}

Our problem is the generalization of Weighted \dos problem (WDS), which is \NPC ~\cite{maw1997weighted}. It is polynomial time solvable for trees~\cite{natarajan1978optimum}. For strongly chordal graphs, WDS is also polynomial time solvable  ~\cite{farber1984domination}. WDS is solvable in time $\OO(n^3)$~\cite{farber1985domination} for permutation graphs. There are algorithms that solve WDS in time $\OO(n)$ for interval graphs, and for circular-arc graphs, in time $\OO(n+m)$~\cite{chang1998efficient}. Minimum Weight \dos problem (MWDS) is \NP-Hard. There is a greedy heuristic algorithm~\cite{lin2016effective} for Minimum Weight \dos MWDS. Most existing approximation algorithms focus on some special graphs, such as disk graphs, growth-bounded graphs, etc. However, they are not easily generalized to general graphs. An polynomial time approximation algorithm (\PTAS)~\cite{wang2012ptas} (restricted to polynomial growth bounded graphs with a bounded degree constraint) and metaheuristic algorithms (hybrid genetic algorithm and hybrid ant colony optimization)~\cite{potluri2013hybrid} were introduced for MWDS.

The \dos problem is \FPT, where \tw is the parameter. If a graph $\GG$ containing $n$ vertices with tree decomposition of width at most $tw$ is given, then \dos problem can be solved in $4^{tw}\cdot {tw}^{\OO(1)}\cdot n$ and $3^{tw}\cdot n^{\OO(1)}$ time by two different algorithms~\cite{cygan2015parameterized}. By assuming Strong Exponential Time Hypothesis (\SETH) is \true, \dos can not be solved in time $(3-\epsilon)^{tw}\cdot n^{\OO(1)}$~\cite{lokshtanov2011known}. It is also \FPT in the graph that does not contain the complete bipartite graph $K_{(1,3)}$ as an induced subgraph. ~\cite{cygan2011dominating,hermelin2011domination}. There is an algorithm which determines whether a graph admits \dos of fixed cardinality $k \geq 2$ in time $\OO(n^{k+o(1)})$~\cite{eisenbrand2004complexity}. There is an \FPT algorithm for Planar \dos with complexity $\OO(8^k \cdot n)$ ~\cite{downey2013fundamentals,alber2005refined}. $\HH-$minor free graph $\GG$ with $n$ vertices admits polynomial kernel~\cite{fomin2012linear}. \dos problem is \WO-hard for the special class of graphs, like, axis-parallel unit squares, intersection graphs of axis-parallel rectangles, axis-parallel segments~\cite{markarian2012degree}. \dos problem is \WTw-complete, thus it is unlikely to have an \FPT algorithm for graphs in general~\cite{niedermeier2006invitation,cygan2015parameterized}. Constant factor approximation exists for planar graphs to find \dos in polynomial time ~\cite{baker1994approximation}. Till now, no constant factor approximation is known for \dos for general graphs~\cite{gambosi1999complexity}. The exact algorithm for the \dos problem runs in time $\OO(1.4969^n)$~\cite{van2011exact}, which is the best known result. The algorithm for Minimum \dos takes time $\OO(1.4864^n)$ with polynomial space~\cite{iwata2012faster}.


\section{Preliminaries}\label{sec:prelim}
This section provides a recap of key definitions and notations used throughout this paper. We define $\GG(\VV, \EE)$ as the graph with the set of vertices $\VV$ and the set of edges $\EE$, where $|\VV|=n$ and $|\EE|=m$. The size of the knapsack and the target value are represented as $s$ and $d$, respectively. The weight and profit of a vertex $u\in\VV$ are denoted as $w(u)$ and $\alpha(u)$ respectively. We denote the closed interval $[1,n]=\{x\in\mathbb{Z}:1\leq x \leq n\}$.
	\begin{definition}[\kp]\label{def:kp}
		Given a set $\XX=[1,n]$ of $n$ items with sizes $\theta_1,\ldots,\theta_n,$ values $p_1,\ldots,p_n$, capacity $b$ and target value $q$, compute if there exists a subset $\II \subseteq \XX$ such that $\sum_{i\in \II} \theta_i \leq b$ and $\sum_{i\in \II} p_i \geq q$. We denote an arbitrary instance of \kp by $(\XX,(\theta_i)_{i\in\XX}, (p_i)_{i\in\XX}, b,q)$.
	\end{definition}
	
	\begin{definition}[\dos]\label{def:dos}
		Given a graph $\GG=(\VV,\EE)$ and a positive integer $k$, compute if there exists a subset $\VV' \subseteq \VV$ such that any vertex of $\VV$ is either included in $\VV'$ or adjacent to at least one of the vertex of $\VV'$ and $|\VV'|\leq k$. We denote an arbitrary instance of \dos by $(\GG,k)$.
	\end{definition}
	
	\begin{definition}[\vc]\label{def:vc}
		Given a graph $\GG=(\VV,\EE)$ and a positive integer $k$, compute if there exists a subset $\VV' \subseteq \VV$ such that at least one end point of every edge belongs to $\VV'$ and $|\VV'|\leq k$. We denote an arbitrary instance of \vc by $(\GG,k)$.
	\end{definition}
	

\begin{definition}[\vcknapsack\cite{dey2024knapsackwith}]\label{def:vckp}
	
	Given an undirected graph $\GG=(\VV,\EE)$,  weight and profit of vertices are defined as the functions $w:\VV \rightarrow \mathbb{Z}_{\geq 0} $ and $\alpha:\VV \rightarrow \mathbb{R}_{\geq 0} $ respectively, size $s(\in \mathbb{Z^+})$ of the knapsack and target value $d(\in \mathbb{R^+})$, compute if there exists a subset $\UU\subseteq\VV$ of vertices such that (I) \UU is a \vc, (II) $\sum_{u\in\UU} w(u) \le s$, (III) $\sum_{u\in\UU} \alpha(u) \ge d$.
	We denote an arbitrary instance of \vcknapsack by $(\GG,(w(u))_{u\in\VV},(\alpha(u))_{u\in\VV},s,d)$.
	
\end{definition}


\begin{definition}[\udos]\label{def:udos}
	Given a graph $\GG=(\VV,\EE)$ and a positive integer $k$, compute if there exists a subset $\VV' \subseteq \VV$ such that $\VV'$ is a \dos of size at least $k$ i.e., $|\VV'|\geq k$. We denote an arbitrary instance of ~\udos by $(\GG,k)$.
\end{definition}
Maximum cardinality among all Minimal \dos problem or Upper Domination (UD) (denoted as $\Gamma(\GG)$) problem is \NPC~\cite{cheston1990computational}, polynomial time solvable for bipartite graphs~\cite{cockayne1981contributions,jacobson1990chordal} and Chordal graphs~\cite{jacobson1990chordal}. 



\begin{definition}[\ExpoTH]\label{def:eth}
	\ExpoTH(\ETH) states that there exists an algorithm solving the 3-SAT problem in time $\OO^*(2^{cn})$, where $\delta_3(>0)$ is the infimum of the set of constants $c$. Equivalently, there is no sub-exponential $(2^{o(n)})$time algorithm for the 3-SAT problem.
\end{definition}
\begin{definition}[\wcs]\label{def:wcs}
	Given a circuit $\CC$ and a positive integer $k$, compute if there exists a satisfying assignment with weight exactly $k$. We denote an arbitrary instance of \wcs(\WCS) as $(\CC,k)$.
\end{definition}
\begin{definition}[\wh]\label{def:w-hierarchy} 
	A parameterized problem \Pc is in the class $\mathsf{W}[t]$ if there is a parameterized reduction form \Pc to $\WCS_{(t,d)}$, where $t\geq 1$ for some $d\geq 1$. $d$ denotes the depth of the circuit and $t$ (weft) denotes the maximum number of large nodes (nodes with indegree $> 2$) on a path from the input to the output node.
\end{definition}


\section{\textbf{Our Problem Definitions}}\label{sec:prob-def}
We define our problems here formally. They are as follows:

\begin{definition}[\dosknapsack]\label{def:doskp}
	
	Given an undirected graph $\GG=(\VV,\EE)$,  weight and profit of vertices are defined as the functions $w:\VV \rightarrow \mathbb{Z}_{\geq 0} $ and $\alpha:\VV \rightarrow \mathbb{R}_{\geq 0} $ respectively, size $s(\in \mathbb{Z^+})$ of the knapsack and target value $d(\in \mathbb{R^+})$, compute if there exists a subset $\UU\subseteq\VV$ of vertices such that: 
		\begin{enumerate}
			\item \UU is a \dos,
			\item $\sum_{u\in\UU} w(u) \le s$,
			\item $\sum_{u\in\UU} \alpha(u) \ge d$.
	\end{enumerate}
	We denote an arbitrary instance of \dosknapsack by $(\GG,(w(u))_{u\in\VV},(\alpha(u))_{u\in\VV},s,d)$.
	
\end{definition}

\begin{definition}[\kdosknapsack]\label{def:kdoskp}
	
	Given an undirected graph $\GG=(\VV,\EE)$,  weight and profit of vertices are defined as the functions $w:\VV \rightarrow \mathbb{Z}_{\geq 0} $ and $\alpha:\VV \rightarrow \mathbb{R}_{\geq 0} $ respectively, size $s(\in \mathbb{Z^+})$ of the knapsack, target value $d(\in \mathbb{R^+})$ and $k \in  \mathbb{Z^+} $, compute if there exists a subset $\UU\subseteq\VV$ of vertices such that: 
		\begin{enumerate}
			\item \UU is a \dos of size $k$,
			\item $\sum_{u\in\UU} w(u) \le s$,
			\item $\sum_{u\in\UU} \alpha(u) \ge d$.
	\end{enumerate}
	We denote an arbitrary instance of \dosknapsack by $(\GG,(w(u))_{u\in\VV},(\alpha(u))_{u\in\VV},s,d,k)$.
	
\end{definition}

\begin{definition}[\minimaldosknapsack]\label{def:minimaldoskp}
	
	Given an undirected graph $\GG=(\VV,\EE)$,  weight and profit of vertices are defined as the functions $w:\VV \rightarrow \mathbb{Z}_{\geq 0} $ and $\alpha:\VV \rightarrow \mathbb{R}_{\geq 0} $ respectively, size $s(\in \mathbb{Z^+})$ of the knapsack, target value $d(\in \mathbb{R^+})$ and $k \in  \mathbb{Z^+} $, compute if there exists a subset $\UU\subseteq\VV$ of vertices such that: 
		\begin{enumerate}
			\item \UU is a Minimal \dos,
			\item $\sum_{u\in\UU} w(u) \le s$,
			\item $\sum_{u\in\UU} \alpha(u) \ge d$.
	\end{enumerate}
	We denote an arbitrary instance of \dosknapsack by $(\GG,(w(u))_{u\in\VV},(\alpha(u))_{u\in\VV},s,d,k)$.
	
\end{definition}



\section{\textbf{Results on Classical NP-hardness }}\label{Sec: Res-C-NPH}
This section presents the classical \NP-hardness results for our problem in various graph classes. \dos problem (see, Definition~\ref{def:dos}) is strongly \NPC as the reduction is done from \vc~\cite{DBLP:conf/stoc/GareyJS74}. For proving \NP-completeness of \dosknapsack, we reduce \dos to \dosknapsack. The theorem for strong \NP-completeness of \dosknapsack is shown below. 

\begin{theorem}\label{thm:doskp-npc}
	\dosknapsack is strongly \NPC.
\end{theorem}
	\begin{proof}
		Clearly, \dosknapsack $\in$ \NP. Let us construct the following instance $(\GG^\pr(\VV^\pr=\{u_i: i\in[1,n]\},\EE^\pr),(w(u))_{u\in\VV^\pr},(\alpha(u))_{u\in\VV^\pr},s,d)$ of \dosknapsack from a given arbitrary instance  $(\GG(\VV=\{v_i: i\in [1,n]\},\EE),k)$ of \dos of size $k$.
		\begin{align*}
			&\qquad\qquad\qquad\qquad\qquad\VV^\pr = \{u_i : v_i\in \VV, \forall i\in[1,n]\}\\
			&\qquad\qquad\qquad\qquad\qquad\EE^\pr = \{\{u_i,u_j\}: \{v_i, v_j\} \in\EE, i\neq j, \forall i, j\in[1,n]\}\\
			&\qquad\qquad\qquad\qquad\qquad w(u_i) = 1, \alpha(u_i)=1 \qquad \forall i\in[1,n]\\
			&\qquad\qquad\qquad\qquad\qquad s =d = k
		\end{align*}
		The \dosknapsack problem has a solution iff \dos has a solution.
		Let $(\GG^\pr,(w(u))_{u\in\VV^\pr},(\alpha(u))_{u\in\VV^\pr},s,d)$ be an instance of \dosknapsack such that $\WW^\pr$ be the resulting subset of $\VV^\pr$ with (i) $\WW^\pr$ is a \dos, (ii) $\sum_{u\in \WW^\pr} w(u) = k$, (iii) $\sum_{u\in \WW^\pr} \alpha(u) = k$. This means that the set $\WW^\pr$ is a \dos which gives the maximum profit $k$ for the bag capacity of size $k$. In other words, $\WW^\pr$ is a \dos of size $k$, as $\sum_{u\in \WW^\pr} w(u) = \sum_{u\in \WW^\pr} \alpha(u) = k$. Since $\WW' =\{u_i: v_i \in \WW, \forall i \in[1,n]\}$, $\WW$ is a \dos of size $k$. Therefore, the \dos instance is a \yes instance.
		
		Conversely, let us assume that \dos instance $(\GG,k)$ is a \yes instance. Then there exists a subset $\WW\subseteq \VV$  of size $k$ such that it outputs a \dos. Consider the set $\WW^\pr=\{u_i: v_i \in \WW, \forall i \in[1,n]\}$.
		Since each vertex of $\WW^\pr$ is involved with weight 1 and produces a profit amount 1, $\WW^\pr$ is a \dos of max bag size and total profit $k$. 
		\noindent
		Therefore, the \dosknapsack instance is a \yes instance.
	\end{proof}

This theorem is essential for proving the following theorem, which shows that \dosknapsack is strongly \NPC for a special graph class, namely, bipartite graphs. We reduce the \dosknapsack instance for the general graph to the \dosknapsack for bipartite graphs. 

\begin{theorem}\label{thm:doskp-bipt-npc}
	\dosknapsack is strongly \NPC even when the instances are restricted to a bipartite graph. 
\end{theorem}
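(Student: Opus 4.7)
The plan is to reduce from \dos restricted to bipartite graphs, which is a classically known \NPC problem (for example, established by Bertossi in 1984). Given an arbitrary bipartite instance $(\GG=(\VV,\EE),k)$ of \dos, I would apply verbatim the reduction used in Theorem~\ref{thm:doskp-npc}: keep the vertex and edge sets unchanged, assign $w(u)=\alpha(u)=1$ to every $u\in\VV$, and set the knapsack capacity and target profit to $s=d=k$.

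Because the constructed \dosknapsack instance is supported on exactly the same graph, bipartiteness is automatically preserved, so the reduction lands inside the desired graph class. Correctness is then immediate from the same argument as in Theorem~\ref{thm:doskp-npc}: the profit constraint $\sum_{u\in\UU}\alpha(u)\geq k$ forces at least $k$ chosen vertices while the capacity constraint $\sum_{u\in\UU}w(u)\leq k$ forces at most $k$, so any feasible \dosknapsack solution is exactly a dominating set of size $k$ in $\GG$, and conversely any dominating set of size $k$ in the bipartite $\GG$ is trivially feasible for the constructed \dosknapsack instance.

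Finally, every numerical value introduced by the reduction is either $1$ or equals $k$, which is bounded by $|\VV|$. Hence all weights, profits, the capacity, and the target are polynomially (indeed linearly) bounded in the input size, and the reduction itself runs in polynomial time. This upgrades the reduction to a strong \NP-hardness reduction and, together with membership in \NP inherited from Theorem~\ref{thm:doskp-npc}, yields strong \NPC-ness.

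The main (and rather mild) obstacle is simply to invoke the correct classical reference establishing \NPC-ness of \dos on bipartite graphs, since the bipartite case does not follow from the reduction chain used for Theorem~\ref{thm:doskp-npc} itself; once that hardness result is quoted, no new combinatorial work is required beyond recycling the equivalence argument already written for the general-graph case.
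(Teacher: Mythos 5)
Your proof is correct, but it takes a genuinely different route from the paper. The paper does \emph{not} start from bipartite \dos: it reduces the already-established general-graph \dosknapsack instance (Theorem~\ref{thm:doskp-npc}) to a bipartite one via an explicit gadget --- two copies $X$ and $Y$ of the vertex set joined by the edges $\{v_i,v_i'\}$, $\{v_i,v_j'\}$, $\{v_i',v_j\}$ for each original edge, plus an extra vertex $z$ adjacent to all of $Y$, with all weight and profit concentrated on the $Y$-copies and $z$ given weight and profit $0$. This keeps the argument self-contained within the paper (only Theorem~\ref{thm:doskp-npc} is invoked) at the cost of a case analysis in the converse direction. Your approach instead quotes the classical \NP-completeness of \dos on bipartite graphs (Bertossi 1984 / Chang 1984) and then recycles the reduction of Theorem~\ref{thm:doskp-npc} verbatim, which preserves bipartiteness for free since the graph is unchanged; this is shorter and the strong-\NP-hardness bookkeeping (all numbers are $1$ or $k\leq|\VV|$) is immediate, but it shifts the burden onto an external hardness result that the paper's main text never formally invokes. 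Both arguments are sound; you correctly identified that the external citation is the one load-bearing ingredient your version needs. The only minor point to tighten (shared, in fact, with the paper's own write-up of Theorem~\ref{thm:doskp-npc}) is the equivalence between ``dominating set of size at most $k$'' and ``dominating set of size exactly $k$'': a feasible \dosknapsack solution has exactly $k$ vertices, while the \dos instance asks for at most $k$, so you should note that any dominating set of size less than $k$ can be padded to size exactly $k$ by adding arbitrary vertices (supersets of dominating sets are dominating), assuming $k\leq n$.
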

	\begin{proof}
		We have already proven that \dosknapsack is \ NP-complete for general graphs. Now, we reduce an arbitrary instance of \dosknapsack $(\GG(\VV,\EE),(w(u))_{u\in\VV},(\alpha(u))_{u\in\VV}, s, d)$ for a general graph  to the instance of \dosknapsack  $(\BB(\hat{\VV},\hat{\EE}),(w(v))_{v\in\hat{\VV}},(\alpha(v))_{v\in\hat{\VV}}, \hat{s}, \hat{d})$ for a bipartite graph. The reduction is as follows:  
		\begin{align*}
			&\hat{\VV} = X \cup Y \cup \{z\}\\  
			&X=\{v_i : u_i\in \VV, \forall i\in[1,n]\}\\
			&Y=\{v_i' : u_i\in \VV, \forall i\in[1,n]\}\\
			&\hat{\EE} = \{\{v_i,v_i'\}\cup \{\{v_i, v_j'\}\cup \{v_i', v_j\}:\{u_i,u_j\} \in\EE\} \cup  \{z,v_i'\}: i\neq j, \forall i, j\in[1,n]\}\\
			&w(v_i)=0, \alpha(v_i)=0 \quad \forall i\in[1,n]\\
			&w(v_i')=w(u_i), \alpha(v_i')=\alpha(u_i)  \quad \forall i\in[1,n]\\
			&w(z) = 0, \alpha(z)=0\\
			&\hat{s}=s, \hat{d}=d
		\end{align*}
		
		Now, we can prove that the arbitrary graph $\GG$ has a solution for \dosknapsack iff the corresponding bipartite graph $\BB$ has a solution for \dosknapsack.
		
		Suppose, \dosknapsack is a \yes instance for the general graph $\GG$. That means $\GG$ has a \dosknapsack $\WW \subseteq \VV$ such that $\sum_{u\in \WW} w(u)\leq s$ and  $\sum_{u\in \WW} \alpha(u) \geq d$. Let $\hat{\WW}(\subseteq \hat{\VV})=\{v_i' : u_i\in \WW\}\cup \{z\}$  is the \dos in $\BB$, as $\{v_i' : u_i\in \WW\}$ dominates all the vertices of $X$ and $z$ dominates all the vertices of $Y$ and itself. Now,
		\[\sum_{v\in \hat{\WW}} w(v) = w(z) + \sum_{u\in\WW} w(u) \leq s =\hat{s}, \quad \text{since} \quad w(z) = 0\]
		\[\sum_{v\in \hat{\WW}} \alpha(v) = \alpha(z)+ \sum_{u\in\WW} \alpha(u) \geq d =\hat{d},\quad \text{since} \quad \alpha(z) = 0\]
		Hence, \dosknapsack is a \yes instance for bipartite graphs $B$.
		
		Conversely, let \dosknapsack is a \yes instance for bipartite graphs $B$ and $\hat{\WW}$ be any of its arbitrary solution such that $\sum_{v\in \hat{\WW}} w(v)\leq \hat{s}$ and  $\sum_{v\in \hat{\WW}} \alpha(v) \geq \hat{d}$.
		
		\noindent
		\textbf{\textul{Case 1}}: Let $z\in \WW'$.\\
		Without loss of generality, let $X\cap \hat{\WW} \neq \phi $ and/or $Y\cap \hat{\WW} \neq \phi $. Then choosing the vertices $u_i$ from $\VV$ of $\GG$ corresponding to $v_i\in X\cap \hat{\WW}$ and/or $v_i'\in Y\cap \hat{\WW}$ from $\BB$, for some $i\in[1,n]$. Let $\WW=\{u_i:v_i\in (X\cap \hat{\WW}) \text{~or~} v_i'  (Y\cap \hat{\WW}) \text{~or both}, \text{~for some~} i\in[1,n] \} \subseteq \VV$. Then,
		\begin{align*}
			\sum_{u_i\in\WW} w(u_i)&= \sum_{v_i\in (X\cap \hat{\WW})}w(v_i)+\sum_{v_i'\in (Y\cap \hat{\WW})}w(v_i'),\qquad\text{since}~ w(v_i)=0, w(v_i')=w(u_i)\\
			&= \sum_{v_i\in (X\cap \hat{\WW})}w(v_i)+\sum_{v_i'\in (Y\cap \hat{\WW})}w(v_i') + w(z), \qquad\text{as}~ w(z)=0\\
			&=\sum_{v\in \hat{\WW}} w(v) \leq \hat{s}  =s\\
			\sum_{u_i\in\WW} \alpha(u_i)&= \sum_{v_i\in (X\cap \hat{\WW})}\alpha(v_i)+\sum_{v_i'\in (Y\cap \hat{\WW})}\alpha(v_i'),\qquad\text{since}~ \alpha(v_i)=0, \alpha(v_i')=\alpha(u_i)\\
			&= \sum_{v_i\in (X\cap \hat{\WW})}\alpha(v_i)+\sum_{v_i'\in (Y\cap \hat{\WW})}\alpha(v_i') +\alpha(z), \qquad\text{as}~ \alpha(z)=0\\
			&=\sum_{v\in \hat{\WW}} \alpha(v) \geq \hat{d}  =d
		\end{align*}

		\noindent
		\textbf{\textul{Case 2}}: Let $z\notin \WW'$.\\
		This case also gives the same results as we omit the term $w(z)$ and $\alpha(z)$ from $\sum_{v\in \hat{\WW}} w(v)$ and $\sum_{v\in \hat{\WW}} \alpha(v)$, respectively.
		
		Thus, for both of these cases, $\sum_{u\in\WW} w(u) \leq s, \sum_{u\in\WW} \alpha(u) \geq d$, independent of whether $z\in\WW'$ or not. Therefore,  \dosknapsack is a \yes instance for the general graph $\GG$.
\end{proof}

The following theorem applies to star graphs, a special case of bipartite graphs. Though the star graph is bipartite, the following theorem proves that our problem is weakly \NPC for it. 

\begin{theorem}\label{thm:doskp-star-npc}
	\dosknapsack is \NPC even for star graphs.
\end{theorem}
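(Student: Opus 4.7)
Membership in \NP is immediate, so the plan is to give a polynomial-time reduction from \kp, which is the canonical weakly \NPC problem. Given an arbitrary instance $(\XX=[1,n],(\theta_i)_{i\in\XX},(p_i)_{i\in\XX},b,q)$ of \kp, I would construct the star $\GG=(\VV,\EE)$ with center $c$ and leaves $v_1,\ldots,v_n$, i.e.\ $\VV=\{c,v_1,\ldots,v_n\}$ and $\EE=\{\{c,v_i\}:i\in[1,n]\}$. Set $w(c)=0$, $\alpha(c)=0$, and $w(v_i)=\theta_i$, $\alpha(v_i)=p_i$ for every $i\in[1,n]$. Finally, put $s=b$ and $d=q$. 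The construction is clearly polynomial in the input size and, crucially, does not inflate the numerical parameters, which is what will let us conclude only \emph{weak} \NP-completeness.

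The key structural observation driving correctness is that, in a star, every dominating set $\UU$ must either contain $c$ (in which case it dominates all leaves for free) or contain every leaf (since each leaf is the unique neighbour of $c$ among non-center vertices, and if $c\notin\UU$ then each leaf must dominate itself). I would handle the easy direction first: if the \kp instance admits a solution $\II\subseteq\XX$, then $\UU=\{c\}\cup\{v_i:i\in\II\}$ is a dominating set with $\sum_{u\in\UU}w(u)=\sum_{i\in\II}\theta_i\leq b=s$ and $\sum_{u\in\UU}\alpha(u)=\sum_{i\in\II}p_i\geq q=d$, so the \dosknapsack instance is \yes.

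For the converse, given a solution $\UU$ to the constructed \dosknapsack instance, I would split on the two types of dominating set. If $c\in\UU$, then $\II=\{i:v_i\in\UU\}$ is a feasible knapsack solution because $w(c)=\alpha(c)=0$, so the weight and profit of $\UU$ equal those of $\II$ under the original \kp weights and profits. If instead $c\notin\UU$, then $\UU$ contains all $n$ leaves; in this case $\II=[1,n]$ itself is a knapsack solution, because $\sum_{i=1}^n\theta_i=\sum_{v_i\in\UU}w(v_i)\leq s=b$ and similarly for the profits. Either way, the \kp instance is \yes, so the reduction is correct.

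Finally, to justify the word \emph{weakly}, I would invoke the pseudo-polynomial algorithm for trees from \cref{thm:doskp-trees-pseudo-poly}: every star is a tree, so \dosknapsack on stars admits an $\OO(n\cdot\min\{s^2,(\alpha(\VV))^2\})$ algorithm and hence cannot be strongly \NPC unless \PNENP. I do not expect any technical obstacle here; the only subtle point is remembering the ``all leaves'' dominating set in the backward direction, which is why assigning weight and profit zero to the center is a deliberate choice that keeps both cases of the case analysis feasible.
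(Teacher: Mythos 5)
Your reduction is exactly the one the paper uses: a star with a zero-weight, zero-profit center and leaves carrying the \kp weights and profits, with $s=b$, $d=q$, and the same two directions of the equivalence (your explicit case split on whether the center is in the dominating set is handled implicitly in the paper, since the center contributes nothing to either sum). The proposal is correct and takes essentially the same approach, including the closing appeal to \cref{thm:doskp-trees-pseudo-poly} to justify that the hardness is only weak.
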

	\begin{proof}
		\dosknapsack clearly belongs to \NP. We reduce \kp to \dosknapsack. Let $(\XX=[1,n],(\theta_i)_{i\in\XX}, (p_i)_{i\in\XX}, b,q)$ be an arbitrary instance of \kp and using it construct the instance $(\GG(\VV,\EE),(w(v))_{v\in\VV},(\alpha(v))_{v\in\VV},s,d)$ of \dosknapsack for star graphs as follows:\\
		\begin{align*}
			&\qquad\qquad\qquad\qquad\qquad \VV = \{v_i: i\in\XX=[1,n]\}\cup\{v_0\}\\
			&\qquad\qquad\qquad\qquad\qquad \EE = \{\{v_0,v_i\}: 1\le i\le n\}\\
			&\qquad\qquad\qquad\qquad\qquad w(v_0)=\alpha(v_0)=0;\\
			&\qquad\qquad\qquad\qquad\qquad w(v_i) = \theta_i , \alpha(v_i) = p_i~\forall i\in[1,n]\\ 
			&\qquad\qquad\qquad\qquad\qquad s = b, d = q
		\end{align*}
		
		We now claim that the two instances are equivalent, i.e., the \dosknapsack problem has a solution iff \kp has a solution.
		
		In one direction, let us suppose that the \kp instance is a \yes instance. Let $\II\subseteq\XX$ be a solution of \kp such that $\sum_{i\in\II} \theta_i \le b$ and $\sum_{i\in\II} p_i \ge q$. Consider $\UU=\{v_i: i\in\II\}\cup\{v_0\}\subseteq\VV$. We observe that $\UU$ is \dos, since $v_0\in\UU$. We also have
		\[\sum_{v\in \UU} w(v) = \sum_{i\in\II} w(v_i) = \sum_{i\in\II} \theta_i \le b = s, ~\text{and}~ \sum_{v\in \UU} \alpha(v) = \sum_{i\in\II} \alpha(v_i) = \sum_{i\in\II} p_i \ge q = d.\]
		Hence, the \dosknapsack instance is a \yes instance.
		
		In the other direction, let us assume that the \dosknapsack instance is a \yes instance with $\UU\subseteq\VV$ be a \dos such that $\sum_{v\in \UU} w(v) \le s$ and $\sum_{v\in \UU} \alpha(v) \ge d$. Let us consider a set $\II=\{i: i\in[n], v_i\in\UU\}$. We now have
		\[ \sum_{i\in\II} \theta_i = \sum_{i\in\II} w(v_i) =\sum_{v\in \UU} w(v) \le s = b,~\text{and}~ \sum_{i\in\II} p_i = \sum_{i\in\II} \alpha(v_i) = \sum_{v\in \UU} \alpha(v) \ge d=q.\]
		Hence, the \kp instance is a \yes instance.
\end{proof}
We proved that for graphs like star graphs with a particular tree structure, \dosknapsack is weakly \NPC. Now, we explore the results for trees. We provide a dynamic programming-based Pseudo-polynomial time algorithm for trees. The theorem is as follows:
\begin{theorem}\label{thm:doskp-trees-pseudo-poly}
	There is a pseudo-polynomial time algorithm of \dosknapsack for trees with complexity $\OO(n \cdot min\{s^2,(\alpha(V))^2\})$, where $s$ is the size of the knapsack and $\alpha(V)=\sum_{v\in\VV}\alpha(v)$. 
\end{theorem}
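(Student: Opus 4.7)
The plan is to root the tree at an arbitrary vertex $r$ and perform a bottom-up dynamic program. For each vertex $v$ I maintain three tables $A[v][j]$, $B[v][j]$, $C[v][j]$, indexed by a weight budget $j\in\{0,1,\dots,s\}$, each storing the maximum profit achievable over the subtree $T_v$ using total weight at most $j$, conditioned on one of three possible relationships of $v$ to the partial solution: $v$ is selected (state A); $v$ is not selected but has at least one selected child (state B); $v$ is not selected and has no selected child, so $v$ must be dominated by its parent (state C). In all three states the subtree is internally dominated, except that in state C the vertex $v$ itself may remain undominated pending its parent's decision.

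The recurrence at an internal vertex $v$ with children $u_1,\dots,u_k$ combines children by a knapsack-style convolution. For state A, each child may be in any of A, B, or C (since $v$ itself dominates each child), so I convolve the pointwise maxima of the three child tables. For state C, each child must be in A or B to be internally dominated, so I convolve the pointwise maxima of $A[u_i][\cdot]$ and $B[u_i][\cdot]$. State B is the most delicate: it requires that at least one child be in state A. I handle this by carrying an extra boolean flag during the convolution indicating whether an A-child has already been committed, or equivalently by computing the state-C-style convolution twice (once over $\{A,B\}$ and once over $\{B\}$ only) and taking the difference to subtract the all-B configurations. After all children are folded in, I set $A[v][j]=\alpha(v)+(\text{A-partial})[j-w(v)]$ when $j\ge w(v)$, and similarly obtain $B[v][j]$ and $C[v][j]$. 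The instance is a \yes-instance iff $\max\{A[r][s],B[r][s]\}\ge d$.

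For the runtime, each convolution step folds one child's table into the running table in $O(s^2)$ time, and summing over all parent-child edges gives $O(n\cdot s^2)$ overall. A symmetric formulation indexes the tables by profit rather than weight (minimizing weight subject to each achievable profit in $\{0,1,\dots,\alpha(\VV)\}$), yielding an $O(n\cdot(\alpha(\VV))^2)$-time algorithm by the same analysis. Running whichever axis is smaller produces the claimed $O(n\cdot\min\{s^2,(\alpha(\VV))^2\})$ bound.

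The main obstacle I anticipate is enforcing the "at least one child in state A" constraint of state B cleanly inside the weight-budget convolution without blowing up the state space or double-counting; the flag-based bookkeeping resolves it, but the recurrence has to treat the A-contribution of a child as a distinguished event that is irreversible once taken, which requires care when merging the running partial table with each subsequent child. Verifying that the three-state invariant is preserved by every merge, and that the final aggregation at the root correctly captures all dominating sets within budget, are the two correctness checks that drive the argument.
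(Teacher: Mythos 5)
Your proposal is correct and takes essentially the same route as the paper: a rooted bottom-up tree DP whose states record whether a vertex is selected, dominated by a selected child, or left for its parent to dominate, merged child-by-child with knapsack-style tables of size $\min\{s,\alpha(\VV)\}$ (the paper keeps Pareto sets of weight--profit pairs and an auxiliary ``not yet dominated'' array where you keep three explicit max-profit tables, but the decomposition and the $\OO(n\cdot\min\{s^2,(\alpha(\VV))^2\})$ accounting are the same). One caveat: the ``compute the $\{A,B\}$-convolution and the $\{B\}$-only convolution and take the difference'' variant does not work for max-profit tables, since the maximum over configurations with at least one $A$-child cannot be recovered by numerically subtracting the all-$B$ value; rely on the boolean-flag (two running tables) formulation, which is correct.
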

\begin{proof}
	Let $(\GG, (w(u))_{u\in\VV}, (\alpha(u))_{u\in\VV}, s, d)$ be an arbitrary instance of \dosknapsack. We develop a dynamic programming algorithm to compute \dosknapsack for trees \TT. Let \GG=\TT. This algorithm computes the set of "undominated weight-profit pairs" of Dominating Sets in $D[u, d_u]$ (including $u$) and $D[\bar{u}, d_u]$ (without including $u$), where $u$ is the root of the tree \TT. Then, we recursively compute from its next-level children, $v_1, v_2, \ldots, v_{d_u}$. We use the term "undominated weight-profit pair" as the pair $(w_c, \alpha_c)$ whose first component is the sum of weights $w_c = w(\CC) = \sum_{v\in \CC}w(v)$ and second component is the sum of profit values $\alpha_c = \alpha(\CC) = \sum_{v\in \CC}\alpha(v)$ for each vertex $v$ that belongs to the \dos $\CC$ such that $w(\CC)<w(\CC')$ or $\alpha(\CC)>\alpha(\CC')$ or, both compared with some other \dos $\CC'$, where $\CC$ and $\CC'$  both are the dominating sets for the node $u$ in $D[u, d_u]$. We Formally define $D[u, d_u]=\{(w_c, \alpha_c):\exists$ some $\dos$ $\CC$ with $w(\CC)=w_c<w(\CC')$ and/or $\alpha(\CC) = \alpha_c>\alpha(\CC')$, where $\CC'$ is another \dos and $u$ is included in both $\CC$ and $ \CC'\}$. $D[\bar{u}, d_u]$ also carries the same definition but without including $u$.

	\noindent%
	\begin{minipage}[t]{.63\textwidth}
		\parindent=.6cm
		We consider the maximum degree of $u$ is $d_u$ and  $v_1, v_2, \ldots, v_{d_u}$ with maximum degree $d_{v_1}, d_{v_2}, \ldots, d_{v_{d_u}}$, respectively. $D[u, i]$ (or $D[\bar{u}, i]$) is nothing but a 2-dimensional array of size $(i\times s)$ such that each cell contains a set of "undominated weight-profit pairs" including $u$ (or not including $u$), where $0\leq i\leq d_u$. So, it is obvious that  $D[u, d_u]$ and $D[\bar{u}, d_u]$ are 1-dimensional arrays and each cell of them contains a set of "undominated weight-profit pairs" of corresponding Dominating Sets.
		
	\end{minipage}%
	\raisebox{-.1\textheight}{\rule{0pt}{.1\textheight}}
	\hfill
	\begin{minipage}[t]{.35\textwidth}
		\begin{figure}[H]
			\centering
			\includegraphics[height=2cm,width=3cm]{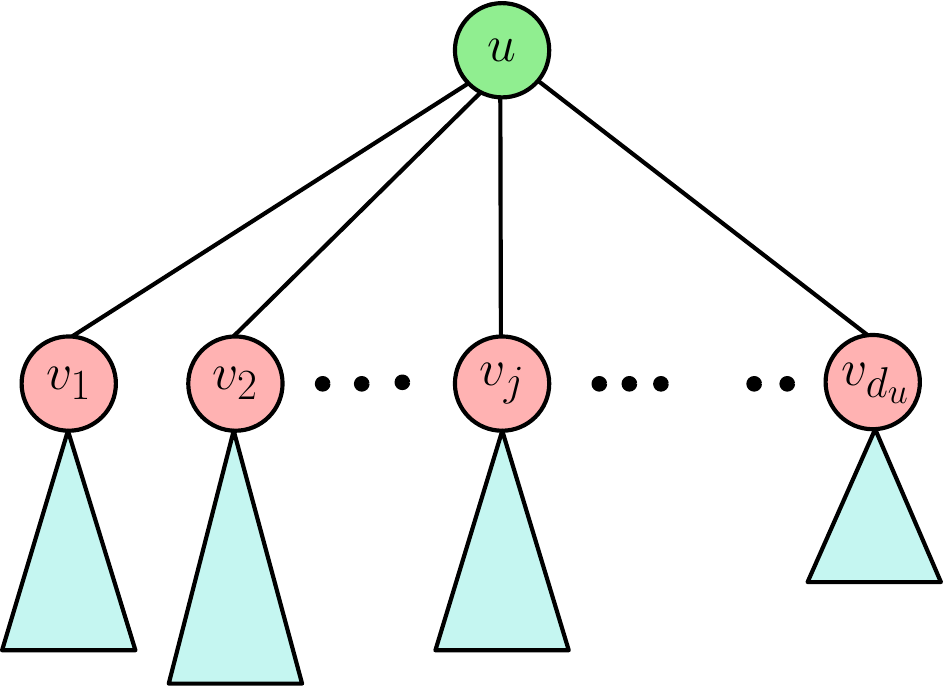}
			\caption{ Sub-trees for the tree rooted at $u$ and the children of $u$}
			\label{fig:t1}
		\end{figure}
	\end{minipage}%
	
	After computing $D[u, d_u]$ and $D[\bar{u}, d_u]$, we remove dominated and duplicated weight-profit pairs from $D[u, d_u] \cup D[\bar{u}, d_u]$ and check the condition of the knapsack size $s$ so that the knapsack can contain each undominated weight-profit pair of corresponding \dos individually. So, output \YES if a pair exists $(w_c, \alpha_c) \in D[u, d_u] \cup D[\bar{u}, d_u]$ such that $w_c \leq s$ and $\alpha_c \geq d$, otherwise \NO.\\
	
	\textbf{Initialization:} If $u$ is a leaf node $D[u, 0]=\{(w(u), \alpha(u))\}, D[\bar{u}, 0]=\{(0, 0)\}$. If $u$ is non-leaf node, we initialize $D[u, 0]=\{(w(u), \alpha(u))\}$, $D[\bar{u}, 0]=\{(0, 0)\}$, $D[\bar{u}, 1]=D[v_1, d_{v_1}]$, $P_v'=D[\bar{v_1}, d_{v_1}]$, where $P_v'$ is an 1-dimensional array containing all pairs of $D[\bar{v_1}, d_{v_1}]$ initially.\\
	
	\parindent=0cm 
	\textbf{Updating DP table:} Using bottom-up recursive computation, we have  $D[v_x,d_{v_x}]$ and $D[\bar{v_x},d_{v_x}]$ for each child $v_x$ of $u$ for $1\leq x\leq d_u$. Now, we compute $D[u, j]$ in $j^{th}$ iteration for $1\leq j\leq d_u$, where $D[u, j-1]$ and $D[\bar{u}, j-1]$ are already computed in $(j-1)^{th}$ iteration. Initially we set $D[u, j] = \emptyset$ and update $D[u, j]$ to $D[u, j]\cup\{(w_1 + w_2, \alpha_1 + \alpha_2)\}$ for each pair $(w_1, \alpha_1) \in D[v_j, d_{v_j}]\cup D[\bar{v_j}, d_{v_j}]$ and each pair $(w_2, \alpha_2) \in D[u, j-1]$ (See, Figure~\ref{fig:t1}). Then, we remove dominated and duplicated pairs from $D[u, j]$. This process executes for each value of $ j$ from $1$ to $d_u$. And, finally, we get $D[u, d_u]$.
	
	\parindent=.6cm
	Similarly, for computing $D[\bar{u}, j]$ for $2\leq j\leq d_u$ , we set $D[\bar{u}, j] = \emptyset$. But the difference here is that if the $j^{th}$ child $v_j$ of $u$ is a leaf node, we must have to include it in all the Dominating Sets because as $u$ is not included in any of the Dominating Sets of $D[\bar{u}, d_u]$, so $v_j$ would not be dominated by any of the vertices of $\TT$. Then update $D[\bar{u}, j]$ to $ D[\bar{u}, j]\cup\{(w_3 + w_4, \alpha_3 + \alpha_4)\}$ for each pair $(w_3, \alpha_3) \in D[v_j, d_{v_j}]$ and each pair $(w_4, \alpha_4) \in D[\bar{u}, j-1]$. Now, if $v_j$ is a non-leaf node, two situations are possible for updating  $D[\bar{u}, j]$. That is for computing Dominating Sets without including $u$,
	
	$\bullet$ (I) if in the $(j-1)^{th}$ iteration at least one of child $v_i (0\leq i\leq j-1)$ of $u$ dominates $u$, then we may or may not include $v_j$ in the Dominating Sets. So, we update $D[\bar{u},j]$ as $D[\bar{u}, j]\cup\{(w_5 + w_6, \alpha_5 + \alpha_6)\}$ for each pair $(w_5, \alpha_5) \in D[v_j, d_{v_j}]\cup D[\bar{v_j}, d_{v_j}]$ and each pair $(w_6, \alpha_6) \in D[\bar{u}, j-1]$ and also 
	
	$\bullet$ (II) if in the $(j-1)^{th}$ iteration no child $v_i (0\leq i\leq j-1)$ of $u$ dominates $u$, then we have to add $v_j$ in the Dominating Sets. To do so, we introduce another 1-dimensional array $P_v$ and initialize it with $\emptyset$ and update it with $P_v$ to $P_v\cup\{(w_7 + w_8, \alpha_7 + \alpha_8)\}$ for each pair $(w_7, \alpha_7) \in P_v'$ and each pair $(w_8, \alpha_8) \in D[v_j, d_{v_j}]$. $P_v'$ is actually contains all those pairs for which, till $(j-1)^{th}$ iteration $u$ is not dominated. Then we update $D[\bar{u}, j]$ to $ D[\bar{u}, j]\cup P_v$.
	
	These two steps (I) and (II) will execute sequentially. Now, we update $P_v'$ for the next iteration ($(j+1)^{th}$) by replacing all pairs of $P_v'$ by $ \{(w_9 + w_{10}, \alpha_9 + \alpha_{10})\} $ for each pair $(w_9, \alpha_9) \in P_v'$ and each pair $(w_{10}, \alpha_{10}) \in D[\bar{v_j}, d_{v_j}]$. Remove all dominated and duplicated pairs from $D[\bar{u}, j]$. We execute this process for all $j$ from $2$ to $d_u$. 
	After Computing $D[u, d_u]$ and $D[\bar{u}, d_u]$, we remove all dominated and duplicated pairs from $D[u, d_u] \cup D[\bar{u}, d_u]$ such that for each pair in $(w_c, \alpha_c) \in D[u, d_u] \cup D[\bar{u}, d_u]$, $w_c \leq s$ and $\alpha_c \geq d$ and return \YES if we set such pair $(w_c, \alpha_c)$ with $w_c \leq s$ and $\alpha_c \geq d$ in $D[u, d_u] \cup D[\bar{u}, d_u]$; Otherwise, \NO.\\
	
	\noindent
	{\textbf{Correctness:} Let after removal of all dominated and duplicated pairs from $D[u, d_u] \cup D[\bar{u}, d_u]$ such that $w_c \leq s$ and $\alpha_c \geq d$, we get the final set $\FF$ of undominated weight-profit pairs of Dominating Sets. Now we will prove that each pair $(w_c, \alpha_c) \in \FF$, is a \dos $\CC$ such that $w(\CC)=\sum_{v\in \CC}w(v)=w_c$ and $\alpha(\CC)=\sum_{v\in \CC}\alpha(v)=\alpha_c$.}
	
	Let us consider $\exists$ an arbitrary pair $(w_c, \alpha_c) \in \FF$. In the above algorithm, we store $ (w(v), \alpha(v))$ in $D[v, 0]$ if $v$ is leaf node. Since no vertex other than leaves is encountered in $0^{th}$ level, the above hypothesis holds trivially. Let us suppose $v_j$ is a non-leaf node, and it is the $j^{th}$ child of the root node $u$ and for all $j= 1$ to $d_u$, and $D[v_j, d_{v_j}] \cup D[\bar{v_j}, d_{v_j}]$ gives the sets of undominated weight-profit pairs of Dominating Sets at (root -1)$^{th}$ level. Now, for the root node $u$, we compute $D[u, d_u]$ and  $D[\bar{u}, d_u]$. From the above algorithm, each pair $(w_1, \alpha_1) \in D[v_j, d_{v_j}]\cup D[\bar{v_j}, d_{v_j}]$ is a \dos. Moreover, each pair $(w_2, \alpha_2) \in D[u, j-1]$ have calculated from $0^{th}$ to $(j-1)^{th}$ subtree of $u$ ($v_1$ to $v_{j-1}$ including $u$). Therefore, $(w_2, \alpha_2)$ is also a \dos, as in trees, there is no overlap between two subtrees. Since, we are summing up the pairs $(w_1+w_2, \alpha_1+\alpha_2)$ and storing it in $D[u, d_u]$, hence no pairs come twice in $D[u, d_u]$ and gives us \dos $\CC$ with $w(\CC)=w_c$ and $\alpha(\CC)=\alpha_c$ for some arbitrary pair $(w_c, \alpha_c)$. A similar thing will happen when we consider $D[\bar{u}, d_u]$. 
	
	So, it is clear to say that every pair is a \dos in $D[u, d_u] \cup D[\bar{u}, d_u]$ because $D[v_j, d_{v_j}]$ and $D[\bar{v_j}, d_{v_j}]$ gives the sets of undominated weight-profit pairs of Dominating Sets for all $j $ from $1$ to $d_u$. Thus, by considering the hypothesis TRUE in (root -1)$^{th}$ level, we proved by induction that it is also TRUE for the root node. \\
	
	\noindent
	\textbf{Complexity:} For each $n$ vertices we are computing $D[u, d_{u}]$ and $D[\bar{u}, d_{u}]$ for all $u \in \VV$. In computation of $D[u, d_{u}]$ we store $D[u, j]$ to $D[u, j]\cup\{(w_1+w_2, \alpha_1+\alpha_2)\}$ for each pair $(w_1, \alpha_1) \in D[v_j, d_{v_j}]\cup D[\bar{v_j}, d_{v_j}]$ and each pair $(w_2, \alpha_2) \in D[u, j-1]$. So, this part gives the complexity $\OO( min\{s^2, (\alpha(\VV))^2\})$ as each table entry can contain at most $min\{s, (\alpha(\VV))\}$ pairs and similarly for  $D[\bar{u}, d_{u}]$. And we compute $D[u, d_{u}]$, $D[\bar{u}$ and $d_{u}]$ once for each vertex. Hence the total complexity is $\OO(n\cdot min\{s^2, (\alpha(\VV))^2\})$.
\end{proof}
\kdosknapsack is a variant of \dosknapsack, where the size of the \dos is at most $k$. 

\begin{theorem}\label{thm:kdoskp-npc}
	\kdosknapsack is strongly \NPC. 
\end{theorem}
	\begin{proof}
		Clearly, \kdosknapsack $\in$ \NP. Let us construct the following instance $(\GG(\VV=\{u_i: i\in[1,n]\},\EE),(w(u))_{u\in\VV^\pr},(\alpha(u))_{u\in\VV^\pr},s,d, k)$ of \kdosknapsack from an given arbitrary instance  $(\GG'(\VV'=\{v_i: i\in [1,n]\},\EE'),k')$ of \dos of size $k'$.
		\begin{align*}
			&\qquad\qquad\qquad\qquad\qquad \VV = \{u_i : v_i\in \VV', \forall i\in[1,n]\}\\
			&\qquad\qquad\qquad\qquad\qquad \EE = \{\{u_i,u_j\}: \{v_i, v_j\} \in\EE', i\neq j, \forall i, j\in[1,n]\}\\
			&\qquad\qquad\qquad\qquad\qquad w(u_i) = 1, \alpha(u_i)=1 \qquad \forall i\in[1,n]\\
			&\qquad\qquad\qquad\qquad\qquad s =d = k =k'
		\end{align*}
		The \kdosknapsack problem has a solution iff \dos has a solution.
		
		Let $(\GG(\VV=\{u_i: i\in[1,n]\},\EE),(w(u))_{u\in\VV^\pr},(\alpha(u))_{u\in\VV^\pr},s,d, k)$ be the instance of \kdosknapsack such that $\WW$ be the resulting subset of $\VV$ with (i) $\WW$ is a \dos of size $k$, (ii) $\sum_{u\in \WW} w(u) \leq s = k $, (iii) $\sum_{u\in \WW} \alpha(u) \geq d = k$. This means that the set $\WW$ is a \dos which gives the maximum profit $k$ for the bag capacity of $k$. Now, since $k=k'$ and $\WW =\{u_i: v_i \in \WW', \forall i \in[n]\}$ , we can say that  $\WW'$ is a \dos of size $k$. Therefore, the \dos instance is a \yes instance.
		
		Conversely, let us assume that \dos instance $(\GG,k')$ is a \yes instance. Then there exists a subset $\WW'\subseteq \VV$  of size $k'$ such that it outputs a \dos.
		Since $\WW =\{u_i: v_i \in \WW', \forall i \in[n]\}$ and each vertex of $\WW$ is involved with weight 1 and produces profit amount 1 in the reduced graph $\GG$, $\WW$ is a \dos of max bag size and total profit $k'$. Also, we have $k'=k$.
		Therefore, the \kdosknapsack instance is a \yes instance.
\end{proof}
The \NP-completeness proof of this theorem is similar to the proof of Theorem~\ref{thm:doskp-npc} of \dosknapsack.

The proof of the following theorem differs slightly from the proof of \NP-completeness of \dosknapsack for bipartite graphs (Theorem~\ref{thm:doskp-bipt-npc}), as the size of the \dos is involved here.\\

\begin{theorem}\label{thm:kdoskp-bipt-npc}
	\kdosknapsack is strongly \NPC even when restricted to bipartite graphs. 
\end{theorem}
	\begin{proof}
	\dosknapsack is \NPC for the general graph even when we restrict the input graph class to bipartite, which we proved earlier. Now, we reduce an arbitrary instance of \kdosknapsack for the general graph $(\GG(\VV,\EE),(w(v))_{v\in\VV},(\alpha(v))_{v\in\VV}, s, d,k)$ of size $\leq k$ to the instance of a \kdosknapsack for bipartite graphs $(\BB(\hat{\VV},\hat{\EE}),(w(v))_{v\in\hat{\VV}},(\alpha(v))_{v\in\hat{\VV}}, \hat{s}, \hat{d}, k+1)$ of size $ \leq k+1$. The reduction is as follows:
	\begin{align*}
		&\qquad\qquad\qquad\qquad\qquad \hat{\VV} = X \cup Y\\  
		&\qquad\qquad\qquad\qquad\qquad X=\VV(=\{v_i : \forall i\in[1,n]\}) \cup \{x\}\\
		&\qquad\qquad\qquad\qquad\qquad Y=\VV'(=\{v_i' : v_i\in \VV, \forall i\in[1,n]\})\cup\{y\}\\
		&\qquad\qquad\qquad\qquad\qquad \hat{\EE} = \{\{v_i,v_i'\}\cup \{\{v_i, v_j'\}\cup \{v_i', v_j\}:\{v_i,v_j\} \in\EE\} \cup\\
		&\qquad\qquad\qquad\qquad\qquad \{x,v_i'\} \cup \{x,y\}: i\neq j, \forall i, j\in[1,n]\}\\
		&\qquad\qquad\qquad\qquad\qquad w(v_i)=0, \alpha(v_i)=0 \quad \forall i\in[1,n]\\
		&\qquad\qquad\qquad\qquad\qquad w(v_i')=w(u_i), \alpha(v_i')=\alpha(u_i)  \quad \forall i\in[1,n]\\
		&\qquad\qquad\qquad\qquad\qquad w(x)=w(y) = 0, \alpha(x)=\alpha(y)=0\\
		&\qquad\qquad\qquad\qquad\qquad \hat{s}=s, \hat{d}=d\\
	\end{align*}
	Now, we can prove that the arbitrary graph $\GG$ has a \kdosknapsack of size $\leq k$ iff the corresponding bipartite graph $\BB$ has a \kdosknapsack of size $\leq k+1$.\\
	\noindent
	Suppose $\GG$ has a \dosknapsack $\DD$ of size $\leq k$. So, $\sum_{u\in \DD} w(u)\leq s$ and  $\sum_{u\in \DD} \alpha(u)\geq d$. Now, either $\DD_1=\{x\}\cup \{v_i': u_i \in\DD\}$ or $\DD_2=\{y\}\cup \{v_i': u_i \in\DD\}$ must be the \dos for $\BB$. Because $\{v_i': u_i \in\DD\}$ dominates all the vertices $\VV$ and $\VV'$, and we need to choose any one of $x,y$ for the edge $\{x,y\}$. Let, $\hat{\DD}=\DD_1~\text{or}~\DD_2$ is the solution of \kdosknapsack for $\BB$. Thus, $B$ has \dosknapsack $\DD$ of size $\leq k+1$ with $\sum_{v\in \hat{\DD}} w(v)\leq \hat{s}$ and  $\sum_{v\in \hat{\DD}} \alpha(v)\geq \hat{d}$.\\
	
	Conversely, suppose $\BB$ has \dosknapsack $\hat{\DD}$ of size $\leq k+1$ such that $\sum_{v\in \hat{\DD}} w(v)\leq \hat{s}$ and  $\sum_{v\in \hat{\DD}} \alpha(v)\geq \hat{d}$. Let $\DD=\{u_i:v_i\in\hat{D}\cap\VV~||~v'_i\in\hat{D}\cap\VV'~||~ \text{both}\}$ forms the \dosknapsack of $\GG$ of size $\leq k$ with  $\sum_{u\in \DD} w(u)\leq s$ and  $\sum_{u\in \DD} \alpha(u)\geq d$ as $\hat{\DD}$ contains at least one of $x$, $y$ and $ w(x)=\alpha(x)=0, w(y) = \alpha(y)=0$.  
	\end{proof}

\begin{theorem}\label{thm:kdoskp-star-npc}
	\kdosknapsack is \NPC even for star graphs.
\end{theorem}
 The proof of this theorem is very similar to the Theorem~\ref{thm:doskp-star-npc}, so we skip this to avoid being redundant.

Now, for another variant \minimaldosknapsack of \dosknapsack, we prove the following theorem \NP-complete by reducing \udos (see Definition~\ref{def:udos}) to \minimaldosknapsack. 

\begin{theorem}\label{thm:minimaldoskp-npc}
	\minimaldosknapsack is strongly \NPC.
\end{theorem}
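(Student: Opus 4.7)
The plan is to establish membership in \NP and then reduce \udos to \minimaldosknapsack, using the fact that Upper Domination is \NPC as noted in the paper. Membership in \NP is straightforward: a subset $\UU\subseteq\VV$ serves as a certificate, and one can verify in polynomial time that (i) $\UU$ dominates $\VV$, (ii) every $u\in\UU$ has a private neighbour with respect to $\UU$ (equivalently, $\UU\setminus\{u\}$ is not a dominating set) so $\UU$ is minimal, and (iii) the two numeric constraints $\sum_{u\in\UU}w(u)\leq s$ and $\sum_{u\in\UU}\alpha(u)\geq d$ hold.

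For the hardness part, given an arbitrary instance $(\GG(\VV,\EE),k)$ of \udos with $|\VV|=n$, I would construct the following instance of \minimaldosknapsack on the same graph:
\begin{align*}
&w(v)=1,\ \alpha(v)=1\qquad\forall v\in\VV,\\
&s=n,\qquad d=k.
\end{align*}
Because every weight and profit is $1$ (hence polynomially bounded in the input size), any \NP-hardness established via this reduction is automatically \emph{strong} \NP-hardness, matching the claim of the theorem.

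The equivalence of the two instances should then be argued in both directions. In the forward direction, if $\UU$ is a minimal dominating set with $|\UU|\geq k$, then $\UU$ is minimal, $\sum_{u\in\UU}w(u)=|\UU|\leq n=s$, and $\sum_{u\in\UU}\alpha(u)=|\UU|\geq k=d$, so $(\GG,w,\alpha,s,d)$ is a \yes-instance of \minimaldosknapsack. Conversely, if $\UU$ is a feasible solution of \minimaldosknapsack, then by construction $\UU$ is a minimal dominating set and $|\UU|=\sum_{u\in\UU}\alpha(u)\geq d=k$, witnessing a \yes-answer for \udos. Since the size of the constructed instance is linear in that of the original, the reduction runs in polynomial time.

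The only delicate point — and what I would flag as the main (though mild) obstacle — is the minimality requirement. Unlike the reductions used earlier for \dosknapsack and \kdosknapsack, we cannot simply append auxiliary vertices or pendants to tune the size, because that can either destroy minimality (a vertex forced in for domination is never redundant, but artificially added vertices may break minimality elsewhere) or change the maximum minimal dominating set. Keeping the graph unchanged and pushing all the reduction into the weight/profit/capacity parameters sidesteps this issue: $s=n$ imposes no real constraint, $d=k$ enforces exactly the cardinality requirement of \udos, and the property of being minimal is inherited verbatim. Once this observation is in place, the proof is essentially routine.
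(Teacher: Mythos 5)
Your proposal is correct and follows the same overall route as the paper: both establish membership in \NP and then reduce \udos (Upper Domination) to \minimaldosknapsack on the same graph with unit profits, so that the profit constraint encodes the cardinality lower bound $|\UU|\geq k$. The difference is in the knapsack capacity. The paper sets all weights to $1$ (modulo a confusingly stated weight assignment) and $s=d=k$, which forces any feasible solution to have size \emph{exactly} $k$; its converse direction then asserts that an upper dominating set of size at least $k$ yields a minimal dominating set of size exactly $k$, which is a nontrivial interpolation claim that the paper does not justify (a minimal dominating set of size strictly greater than $k$ would violate the capacity $s=k$). Your choice of $s=n$ makes the weight constraint vacuous, so the equivalence with ``there exists a minimal dominating set of size at least $k$'' is immediate in both directions; this is the cleaner and, strictly speaking, the more defensible version of the reduction. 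Your remark that all numbers are polynomially bounded, hence the hardness is strong, matches the paper's intent. One caveat applying equally to both proofs: the paper's Definition~\ref{udos} of \udos literally omits the word ``minimal,'' and both arguments implicitly use the intended reading (a \emph{minimal} dominating set of size at least $k$), without which the source problem is trivial.
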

	\begin{proof}
		Clearly, \minimaldosknapsack $\in$ \NP. Since \udos is \NPC,  we reduce \udos to \minimaldosknapsack to prove NP-completeness. Let $(\GG(\VV=\{v_i: i\in[1, n]\},\EE),k)$ be an arbitrary instance of \udos of size $k$. We construct the following instance $(\GG^\pr(\VV^\pr=\{u_i: i\in[1, n]\},\EE^\pr),(w(u))_{u\in\VV^\pr},(\alpha(u))_{u\in\VV^\pr},s,d)$ of \minimaldosknapsack.
		\begin{align*}
			&\qquad\qquad\qquad\qquad\qquad \VV^\pr = \{u_i : v_i\in \VV, \forall i\in[1, n]\}\\
			&\qquad\qquad\qquad\qquad\qquad \EE^\pr = \{\{u_i,u_j\}: \{v_i, v_j\} \in\EE, i\neq j, \forall i, j\in[1,n]\}\\
			&\qquad\qquad\qquad\qquad\qquad w(u_i) = 1, w(u_j)=0 \quad \text{for~some~} i,j \in [1,n] \\
			&\qquad\qquad\qquad\qquad\qquad \alpha(u_i)=1 \quad \forall i \in [1,n] \\
			&\qquad\qquad\qquad\qquad\qquad s = d =k
		\end{align*}
		\parindent=.6cm
		The reduction of \udos to \minimaldosknapsack works in polynomial time. The \minimaldosknapsack problem has a solution iff \udos has a solution.
		
		Let $(\GG^\pr(\VV^\pr,\EE^\pr),(w(u))_{u\in\VV^\pr},(\alpha(u))_{u\in\VV^\pr},s,d)$ of \minimaldosknapsack such that $\WW$ be the resulting subset of $\VV'$ with (i) $\WW$ is a minimal \dos, (ii) $\sum_{u\in \WW} w(u) \leq s = k$, (iii) $\sum_{u\in \WW} \alpha(u) \geq d = k$. From the above condition (i), as $\WW$ is minimal \dos, we can not remove any vertex from $\WW$ which gives another \dos. Now, since we rewrite the second and third conditions as $|\WW| \leq k$ and $|\WW| \geq k$. So, $|\WW|$ must be $k$. Now, let $\II=\{v_i:u_i \in \WW,\forall i\in[n]\}$. Since $\WW$ is the Minimal \dos of size $k$, then $\II(\subseteq \VV)$ must be a \dos set of size at least $k$.Therefore, the \udos instance is a \yes instance.
		
		Conversely, let us assume that \udos instance $(\GG,k)$ is a \yes instance.\\ 
		Then there exists a subset $\II\subseteq \VV$  of size $ \geq k$ such that it outputs an \udos. Let $\WW=\{u_i:v_i \in \II,\forall i\in[n]\}$. Since $\II$ is an \udos of size $ \geq k$, then $\WW$ must be a Minimal \dos of size $k$. So, $\sum_{u\in \WW} w(u) \leq k =s$ and $\sum_{u\in \WW} \alpha(u) \geq k = d$.\\
		\noindent
		Therefore, the \minimaldosknapsack instance is a \yes instance.
\end{proof}

\section{\textbf{Results on Parameterized Complexity Classes }}\label{Se: Res-Param-Comp}
In this section, we study \dosknapsack under the lens of the parameterized complexity, where our input is restricted by specific parameters. 

\subsection{Solution size}\label{subsec:param-sol-size} 
The first parameter we consider here is the 'solution size'. We study the results for \dosknapsack parameterized by the 'solution size'. The parameter is represented by $k$. A well-known result for the \dos problem is as follows:

\begin{theorem}(~\cite{niedermeier2006invitation})\label{thm:dos-W[2]} 
	\dos is \WTw-complete parameterized by the solution size.
\end{theorem}
The following reduction is the \FPT reduction from \dos to \dosknapsack.

\begin{theorem}\label{thm:doskp-FPT-reduction} 
	There is a parameterized reduction from \dos to \dosknapsack. 
\end{theorem}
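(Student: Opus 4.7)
The plan is to reuse, verbatim, the polynomial-time reduction already constructed in the proof of Theorem~\ref{thm:doskp-npc}. Recall that given an instance $(\GG(\VV,\EE),k)$ of \dos, that construction produces a \dosknapsack instance $(\GG^\pr,(w(u))_{u\in\VV^\pr},(\alpha(u))_{u\in\VV^\pr},s,d)$ in which $\GG^\pr$ is isomorphic to $\GG$, every vertex has unit weight and unit profit, and $s=d=k$. Correctness (that the two instances are equivalent) was shown there, and the construction is clearly computable in polynomial time, in particular in \FPT-time.

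The only remaining point, and the only thing to check to upgrade this into a \emph{parameterized} reduction, is that the parameter of the produced \dosknapsack instance is bounded by a computable function of $k$. The natural parameter for \dosknapsack is the solution size $|\UU|$. Under the unit weight/profit assignment we have $\sum_{u\in\UU}w(u)=|\UU|$ and $\sum_{u\in\UU}\alpha(u)=|\UU|$, so the knapsack constraint forces $|\UU|\leq s=k$ while the profit constraint forces $|\UU|\geq d=k$. Hence every valid solution of the reduced instance satisfies $|\UU|=k$, and so the output parameter equals the input parameter. In particular we can take $f(k)=k$, satisfying the definition of a parameterized reduction.

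Putting these three ingredients together, the reduction from the proof of Theorem~\ref{thm:doskp-npc} qualifies as a parameterized reduction from $(\dos,k)$ to $(\dosknapsack,k)$. There is no real technical obstacle here; the only subtlety worth emphasising is the equality $|\UU|=k$ that pins down the new parameter, which is why the unit-weight / unit-profit gadget with $s=d=k$ is the right choice (as opposed to, say, weights and profits that could blow up the solution cardinality). As an immediate downstream consequence, this reduction together with the known $W[2]$-hardness of \dos parameterized by solution size (cited in Section~\ref{sec:r-work}) yields the $W[2]$-hardness of \dosknapsack that will be claimed in Theorem~\ref{thm:doskp-W[2]}.
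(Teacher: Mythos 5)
Your proposal is correct and follows essentially the same route as the paper: the paper's proof of this theorem literally reuses the unit-weight/unit-profit construction with $s=d=k'=k$ from Theorem~\ref{thm:doskp-npc}, checks polynomial-time computability, notes $f(k)=k$ bounds the new parameter, and re-verifies the equivalence of the two instances. Your observation that the constraints force $|\UU|=k$ is exactly the point the paper makes when it remarks that $\sum_{u\in\WW^\pr}w(u)=\sum_{u\in\WW^\pr}\alpha(u)=k$ pins down the solution size.
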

	\begin{proof}
		Let $(\GG(\VV=\{v_i: i\in [1,n]\},\EE),k)$ be an instance of parameterized problem \dos of size $k$. Now, we construct the instance $((\GG^\pr(\VV^\pr=\{u_i: i\in[1,n]\},\EE^\pr),(w(u))_{u\in\VV^\pr},(\alpha(u))_{u\in\VV^\pr},s,d),k')$ of  \dosknapsack parameterized by $k'$ as follows: 
		\begin{align*}
			&\qquad\qquad\qquad\qquad\qquad \VV^\pr = \{u_i : v_i\in \VV, \forall i\in[1,n]\}\\
			&\qquad\qquad\qquad\qquad\qquad \EE^\pr = \{\{u_i,u_j\}: \{v_i, v_j\} \in\EE, i\neq j, \forall i, j\in[1,n]\}\\
			&\qquad\qquad\qquad\qquad\qquad w(u_i) = 1, \alpha(u_i)=1 \qquad \forall i\in[1,n]\\
			&\qquad\qquad\qquad\qquad\qquad s =d = k'= k
		\end{align*}
		$\bullet$ This computation to obtain the output instance is performed in polynomial time.\\ 
		$\bullet$ The parameter $k'$ is upper bounded by some computable function $f$ of k as here, $f(k)=k$.\\
		$\bullet$ Now we need to show the \dosknapsack is a \yes-instance iff \dos is a \yes-instance.
		
		Let $((\GG^\pr(\VV^\pr=\{u_i: i\in[1,n]\},\EE^\pr),(w(u))_{u\in\VV^\pr},(\alpha(u))_{u\in\VV^\pr},s,d),k')$ be an instance of the parameterized problem \dosknapsack such that $\WW^\pr$ be the resulting subset of $\VV^\pr$ with (i) $\WW^\pr$ is a \dos of size $k'=k$, (ii) $\sum_{u\in \WW^\pr} w(u) = k'= k$, (iii) $\sum_{u\in \WW^\pr} \alpha(u) = k' = k$. This means that the set $\WW^\pr$ is a \dos which gives the maximum profit $k$ for the bag capacity of size $k$. In other words, $\WW^\pr$ is a \dos of size $k$, as $\sum_{u\in \WW^\pr} w(u) = \sum_{u\in \WW^\pr} \alpha(u) = k$. Since $\WW' =\{u_i: v_i \in \WW, \forall i \in[1,n]\}$, $\WW$ is a \dos of size $k$. Therefore, the \dos instance is a \yes instance.
		
		Conversely, let us assume that \dos instance $(\GG,k)$ is a \yes instance. Then there exists a subset $\WW\subseteq \VV$  of size $k$ such that it outputs a \dos. Consider the set $\WW^\pr=\{u_i: v_i \in \WW, \forall i \in[1,n]\}$. Since each vertex of $\WW^\pr$ is involved with weight 1 and produces a profit amount 1, $\WW^\pr$ is a \dos of size $k$ with max bag size and total profit $k$. 
		Therefore, the \dosknapsack instance is a \yes instance.
\end{proof}

This \FPT-reduction from \dos to \dosknapsack shows that \dosknapsack is at least as hard as \dos. Thus, we can write it as follows:
\begin{corollary}\label{thm:doskp-W[2]}
	\dosknapsack and \kdosknapsack are \WTw-hard parameterized by the solution size.
\end{corollary}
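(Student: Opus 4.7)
The plan is to observe that this corollary is an immediate consequence of Theorem~\ref{thm:doskp-FPT-reduction} together with the classical result that \dos itself is $W[2]$-hard when parameterized by the solution size (as cited from Niedermeier and from Cygan et al.\ in Section~\ref{sec:r-work}). Concretely, a parameterized problem is $W[2]$-hard whenever some known $W[2]$-hard problem admits a parameterized reduction to it, and \dos is such a problem; therefore it suffices to point at the parameterized reduction we have just constructed.

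First I would recall the three properties a parameterized reduction must satisfy: it runs in time $f(k)\cdot |x|^{\OO(1)}$, maps \yes-instances to \yes-instances and vice versa, and bounds the new parameter $k'$ by a computable function of the old parameter $k$. Then I would invoke Theorem~\ref{thm:doskp-FPT-reduction}: given $(\GG,k)$ for \dos, the constructed instance of \dosknapsack uses the same graph, unit weights, unit profits, and $s=d=k'=k$. The reduction is clearly polynomial (so it also fits the \FPT budget), the parameter relation $k'=k$ is the identity (so bounded by $f(k)=k$), and the proof of Theorem~\ref{thm:doskp-FPT-reduction} establishes the equivalence of \yes-instances in both directions.

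Finally, composing this reduction with the assumed $W[2]$-hardness of \dos parameterized by $k$ yields that every problem in $W[2]$ parameterizedly reduces to \dosknapsack parameterized by the solution size, which is exactly the definition of $W[2]$-hardness for \dosknapsack. I do not anticipate a genuine obstacle here, since all of the technical work has been done in Theorem~\ref{thm:doskp-FPT-reduction}; the only conceptual point worth flagging is that the ``solution size'' parameter on the \dosknapsack side coincides with the dominating-set size $k$ precisely because the construction forces $|\WW'|=k$ via the unit weights/profits and the budget $s=d=k$, so that a hardness bound in terms of $k$ transfers verbatim to a hardness bound in terms of the new parameter $k'$.
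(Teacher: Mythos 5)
Your proposal is correct and follows exactly the paper's route: the corollary is derived by composing the parameterized reduction of Theorem~\ref{thm:doskp-FPT-reduction} with the known $W[2]$-hardness of \dos parameterized by solution size. Your additional remarks verifying the reduction's running time, the parameter bound $k'=k$, and the equivalence of \yes-instances simply make explicit what the paper leaves implicit.
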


We define \wh in Definition~\ref{def:w-hierarchy}. Now, for showing the membership of \dosknapsack in \WTw, we reduce it to the \wcs (see Definition~\ref{def:wcs}). The reduction is done in the proof of the following theorem.

\begin{theorem}\label{thm:doskp-WCS-reduction}
	There is a parameterized reduction from \dosknapsack to \wcs, which shows that \dosknapsack is in \WTh. 
\end{theorem}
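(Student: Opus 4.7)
The plan is to construct, from an instance $(\GG,(w(v))_{v\in\VV},(\alpha(v))_{v\in\VV},s,d)$ of \dosknapsack with solution-size parameter $k$, a Boolean circuit $\CC$ of weft $2$ together with a weight parameter $k'=k$, so that $\CC$ admits a weight-$k'$ satisfying assignment iff the \dosknapsack instance is a \yes instance with a dominating set of size $k$. The inputs are $n$ Boolean variables $x_v$, one per vertex $v\in\VV$; a weight-$k'$ assignment selects a set $\UU\subseteq\VV$ with $|\UU|=k$, the candidate dominating set.

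The circuit $\CC$ is a large AND of two sub-circuits. The first is the standard domination piece: for every $u\in\VV$ I would introduce a large OR gate $O_u=\bigvee_{v\in N[u]} x_v$, and then take the large AND $\bigwedge_{u\in\VV} O_u$. This is a CNF of depth $2$, weft $2$, and polynomial size, and it is satisfied precisely when $\UU$ dominates $\VV$; this mirrors the textbook $W[2]$-membership argument for \dos. The second sub-circuit will encode the two numeric conditions $\sum_{v\in\UU} w(v)\leq s$ and $\sum_{v\in\UU}\alpha(v)\geq d$; since conjoining two weft-$2$ sub-circuits with a top-level AND keeps the weft at $2$, it suffices to build this piece with weft at most $2$ and polynomial size.

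The main obstacle is exactly this knapsack piece: arithmetic constraints of the form $\sum w(v)\,x_v\leq s$ do not admit an obvious weft-$2$ polynomial-size CNF/DNF. The plan is to exploit that a weight-$k$ assignment lights up exactly $k$ of the $x_v$, so the weight and profit of $\UU$ are determined by these $k$ vertices; I would express each knapsack bound as a large OR of large ANDs of ``witness'' selections on the $x_v$ literals and use bounded-fanin auxiliary gates (which do not contribute to the weft count) to assemble the numeric comparisons. With this sub-circuit in place, correctness is routine: a weight-$k$ satisfying assignment of $\CC$ corresponds to a size-$k$ set $\UU\subseteq\VV$ that dominates $\VV$ and meets both the budget and the profit target, and conversely any such $\UU$ yields an accepting weight-$k$ assignment. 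Combined with Corollary~\ref{thm:doskp-W[2]}, this will establish $W[2]$-membership of \dosknapsack and feed directly into the $W[2]$-completeness statement of Theorem~\ref{thm:doskp-W[2]-Complete}.
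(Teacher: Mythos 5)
There is a genuine gap, and it sits exactly where you flag the ``main obstacle'': the sub-circuit enforcing $\sum_{v\in\UU} w(v)\leq s$ and $\sum_{v\in\UU}\alpha(v)\geq d$ is never actually constructed. The sketch you give --- ``a large OR of large ANDs of witness selections on the $x_v$ literals'' --- has no polynomial-size instantiation in sight: the natural witnesses are the size-$k$ subsets of $\VV$ meeting the budget and profit bounds, and enumerating them costs $\binom{n}{k}=n^{\Theta(k)}$ gates, which exceeds the $f(k)\cdot n^{\OO(1)}$ size bound that a parameterized reduction must respect. Observing that exactly $k$ inputs are lit does not by itself make the arithmetic expressible at weft $2$; some concrete gadget (or a change of input variables together with an argument that the comparison gates stay small and do not raise the weft) is required, and without it the claimed equivalence between weight-$k$ satisfying assignments and \yes instances of \dosknapsack cannot be verified. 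As written, the proof establishes $W[2]$-membership only of the domination-plus-cardinality fragment, not of \dosknapsack.

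For comparison, the paper's proof never confronts this arithmetic at all: it restricts attention to instances with $w(u_i)=\alpha(u_i)=1$ for every $i$ and $s=d=k$ (precisely the images of the reduction in Theorem~\ref{thm:doskp-FPT-reduction}), so the knapsack conditions collapse to ``the chosen set has size $k$,'' which is enforced by the cardinality gates $S_j$ alongside the domination gates $D_i$; the weft-$2$ claim is then read off the two layers of large gates. Your proposal is more ambitious in that it tries to handle arbitrary weights and profits, which is arguably what a membership proof for the general problem should do, but it leaves the one genuinely nontrivial gadget as a promissory note. Either adopt the paper's restriction explicitly (and state that membership is being shown for the unit-weight parameterized variant), or supply a polynomial-size, weft-preserving encoding of the two threshold constraints; the proof is incomplete until one of these is done.
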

\begin{proof}
	Let $((\GG(\VV=\{u_i: i\in[1,n]\},\EE),(w(u))_{u\in\VV},(\alpha(u))_{u\in\VV},s,d),k)$ be the instance of  \dosknapsack parameterized by $k$ such that there exists a subset $\WW\subseteq\VV$ satisfies (i) $\WW$ is a \dos of size $k$, (ii) $\sum_{u\in \WW} w(u) \leq s$, (iii) $\sum_{u\in \WW} \alpha(u)\geq d$.\\ 
	
	\noindent%
	\begin{minipage}[t]{.25\textwidth}
		{\scriptsize
			\begin{align*}
				&\text{Now, we construct the instance} ~(\CC,k) ~\text{of}~\\  
				&\wcs (\WCS)~\text{as follows:}\\
				&\CC_{\text{Input}} = \{x_i : u_i\in \VV, \forall i\in[1,n]\}\\
				&\CC_{\text{Gates}} =\{D_i:i\in[1,n]\}\cup\{S_j:j\in[1,k]\}\cup\\
				&\qquad\qquad\{S_{out},D_{out},W_{out}, P_{out}, F_{out}\}\\
				&D_i= x_i\vee(\bigvee\limits_{u_b\in N[u_i]} x_b) \qquad \forall i\in[1,n]\\
				&D_{out}=\bigwedge\limits_{i=1}^n D_i\\
				&S_j=\left(\bigvee\limits_{i=1}^n x_{ij}\right)=\begin{cases}
					1 & \text{if}~ \#_{(x_{ij}=1)}=k\\ 0 & \text{otherwise}
				\end{cases} \quad \forall j\in[1,k] \\
				&S_{out}=\bigwedge\limits_{j=1}^k S_j \\
				&W_{out}= \left(\bigvee\limits_{i=1}^n x_{i}\right)=\begin{cases}
					1 & \text{if}~ \sum w{(x_i)}\leq s\\ 0 & \text{otherwise}
				\end{cases}\\
				&P_{out}= \left(\bigvee\limits_{i=1}^n x_{i}\right)=\begin{cases}
					1 & \text{if}~ \sum \alpha{(x_i)}\geq d\\ 0 & \text{otherwise}
				\end{cases}\\
				&F_{out}=D_{out}\wedge S_{out}\wedge W_{out} \wedge P_{out}\\
				&\CC_{\text{Output}} = F_{out}
		\end{align*}}
	\end{minipage}%
	\raisebox{-.43\textheight}{\rule{0.5pt}{.45\textheight}}
	\hfill
	\begin{minipage}[t]{.45\textwidth}
		$\bullet$ \textbf{Example}
		\begin{figure}[H]
			\centering
			\includegraphics[height=2cm,width=3.5cm]{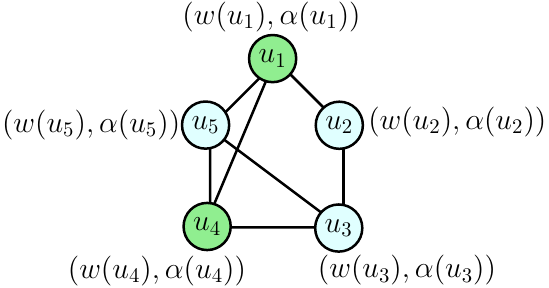}\\
			\includegraphics[height=4cm,width=5cm]{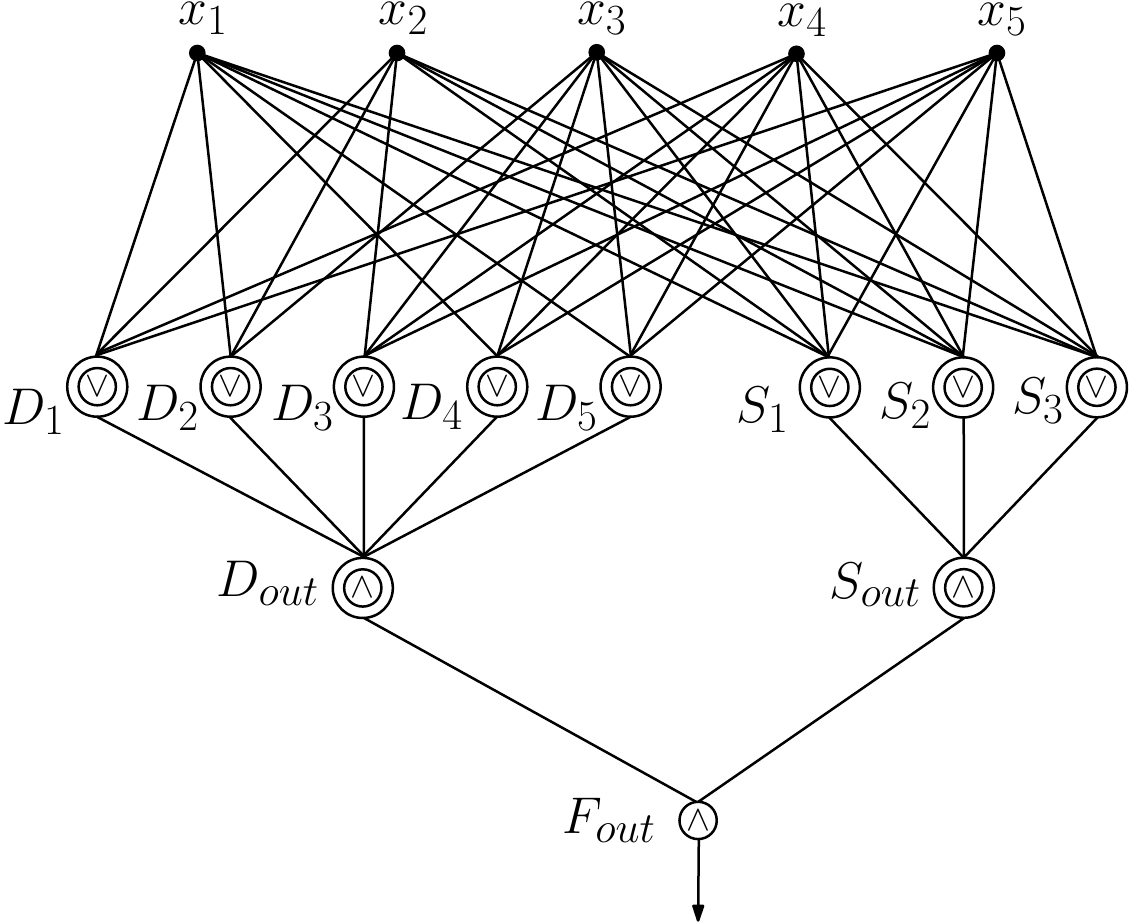}
			\caption{\scriptsize(a). The graph instance $(\GG,3)$ with $w(u_i) = \alpha(u_i)=1 ~\forall i\in[1,5]$ and and $s=d=3$. (b). The \WCS instance $(\CC,3)$  with \textit{wift}=2 corresponding to $\GG$ }
			\label{inputG and corresponding wcs}
		\end{figure}
	\end{minipage}%

	Here, $x_i$ is denoted as the input nodes for the circuit $\CC$ corresponding to the vertex $u_i$ of $\GG$, $\forall i\in [1,n]$. $x_b$ is the circuit input corresponding to the neighbour $u_b$ of $u_i$. If $D_{out}=1$, the corresponding vertices of circuit inputs with value 1 represent \dos, and $S_{out}=1$ gives the set of size $k$. To define $S_j$, we use $x_{ij}$ to identify $i$-th input in $j$-th clause. One additional condition for $S_j$ is that $S_j$ outputs $1$ only if for each $j$-th clause of disjunction, there must be a unique $i$ with exactly one $x_{ij}=1$ corresponding to $x_i=1$. In other words, the number-of $(x_{ij}=1)$ (\textit{viz.} $\#_{(x_{ij}=1)}$) for each clause $j\in [1,k]$ is exactly $k$ for corresponding $x_i=1$.   In this way we can get subset $\{x_{l_j}: \exists ~\text{unique}~ l\in[1,n] ~\text{for each}~ j\in[1,k]\}$ of size $k$ for $\{x_i\}_{i\in[1,n]}$. Similarly, $W_{out}$ and $P_{out}$ gives the output 1 if the conditions $\sum w{(x_i)}\leq s$ and $\sum \alpha{(x_i)}\geq d$ hold, respectively. $W_{out}$ and $P_{out}$ are represented as the total output weight and profit, respectively, corresponding to the \kp. We finally receive the output from the gate $F_{out}$.
	
	Now, for the yes-instance of \dosknapsack, the above circuit has a satisfying assignment, corresponding to the vertices $u \in \WW$. Conversely, for the satisfying assignment of the circuit, i.e., $F_{out}=1$, which means there exists a \dosknapsack of size $k$ with total weight $\leq s$ and profit $\geq d$.
	
	In the above construction of \WCS, one can find the maximum number of large nodes (nodes with indegree $> 2$) on a path from input to the output node (\textit{viz.} weft) is 3. Thus, we can say that \dosknapsack in \WTh.
\end{proof}
In the above example, the Figure~\ref{inputG and corresponding wcs} shows that if $w(u_i) = \alpha(u_i)=1 \forall i\in[1,n]$, then $s=d=k$. Thus, the following corollary holds according to Theorem~\ref{thm:dos-W[2]}.
\begin{corollary}\label{cor:doskp-W[2]-unit-weight-proft}
	\dosknapsack and \kdosknapsack are in \WTw for unary encoding of weights and profits. 
\end{corollary}

The theorem below immediately follows from Theorem~\ref{thm:doskp-FPT-reduction} and Theorem~\ref{thm:doskp-WCS-reduction}.
\begin{theorem}\label{thm:doskp-W[2]-Complete}
	\dosknapsack and \kdosknapsack are \WTh-complete parameterized by the solution size. 
\end{theorem}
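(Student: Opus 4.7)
The plan is to assemble the $W[2]$-completeness statement directly from the two preceding results, since $W[2]$-completeness is by definition the conjunction of $W[2]$-hardness and membership in $W[2]$. I will not need any new construction here; the work has been done in Theorem~\ref{thm:doskp-FPT-reduction} and Theorem~\ref{thm:doskp-WCS-reduction}, and the proof will simply invoke them and observe that parameterized reductions are closed under composition with known hardness/membership facts.

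First I would recall the standard fact that \dos parameterized by solution size is $W[2]$-complete (cited in the paper as~\cite{niedermeier2006invitation,cygan2015parameterized}); in particular it is $W[2]$-hard. Applying the parameterized reduction from \dos to \dosknapsack constructed in Theorem~\ref{thm:doskp-FPT-reduction}, where the output parameter $k'$ equals the input parameter $k$, I obtain a parameterized reduction from a $W[2]$-hard problem to \dosknapsack. Since $W[2]$-hardness is preserved under parameterized reductions, this gives $W[2]$-hardness of \dosknapsack (as already recorded in Corollary~\ref{thm:doskp-W[2]}).

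Next I would invoke Theorem~\ref{thm:doskp-WCS-reduction}, which furnishes a parameterized reduction from \dosknapsack to \WCS on a circuit of weft $2$ (with the parameter preserved as $k$). By the definition of the $W$-hierarchy recalled in Definition~\ref{def:w-hierarchy}, this membership in \WCS with weft $2$ places \dosknapsack in $W[2]$. Combining $W[2]$-hardness with $W[2]$-membership yields $W[2]$-completeness, which is precisely the statement of Theorem~\ref{thm:doskp-W[2]-Complete}.

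There is no real obstacle to overcome, since the hard technical content (the two reductions and the weft-$2$ analysis of the constructed circuit) has already been carried out in the cited theorems; the only care needed is to note that both reductions are genuine \emph{parameterized} reductions, i.e.\ they run in time $f(k)\cdot n^{\OO(1)}$ and bound the output parameter by a computable function of the input parameter, which is evident from both constructions as the parameter is in fact kept equal to $k$.
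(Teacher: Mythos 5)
Your proposal is correct and matches the paper's own argument exactly: the paper likewise derives Theorem~\ref{thm:doskp-W[2]-Complete} immediately by combining the $W[2]$-hardness obtained from the parameterized reduction of Theorem~\ref{thm:doskp-FPT-reduction} with the $W[2]$-membership established via the reduction to \WCS in Theorem~\ref{thm:doskp-WCS-reduction}. Your added remark that both reductions keep the parameter equal to $k$ and run in polynomial time is a correct and slightly more explicit justification than the paper provides.
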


\begin{observation}(~\cite{cygan2015parameterized})\label{obs:reduction-for-ETH}
	Suppose $A$, $B$ $\subseteq \sum^*\times \mathbb{N}$ be two problems with instances $(x,k)$ and $(x',k')$ respectively. There is a parameterized reduction from $A$ to $B$. Then $B$ has an $f(k)|x|^{\OO(g(k))}$-time algorithm implies an $f(k')|x|^{\OO(g(k'))}$-time algorithm for $A$ for some computable function $f$ and non-decreasing function $g$.
\end{observation}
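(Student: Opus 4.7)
The plan is to chain together the reduction and the hypothesised algorithm for $B$, then perform routine time-bookkeeping. Recall that a parameterized reduction from $A$ to $B$ is, by definition, a procedure that on input $(x,k)$ produces an equivalent instance $(x',k')$ of $B$ in time at most $h_1(k)\cdot|x|^{\OO(1)}$, with $k'\leq h_2(k)$ for some computable functions $h_1,h_2$; consequently $|x'|\leq h_1(k)\cdot|x|^{\OO(1)}$. The goal is then to show that the composition \emph{apply the reduction, then run the algorithm for $B$} solves $A$ within the claimed time budget.

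First I would invoke the reduction on $(x,k)$, paying $h_1(k)\cdot|x|^{\OO(1)}$ time to produce $(x',k')$. Correctness of the overall procedure is immediate from the correctness of the reduction: $(x,k)$ is a \yes-instance of $A$ iff $(x',k')$ is a \yes-instance of $B$. Next I would feed $(x',k')$ to the assumed algorithm for $B$, which by hypothesis terminates in time $f(k')\cdot|x'|^{\OO(g(k'))}$ and returns the same answer.

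The remaining step, which I expect to be the only real work, is composing the bounds. Writing $|x'|\leq h_1(k)\cdot|x|^{c}$ for a constant $c$ coming from the $\OO(1)$ in the reduction, we have
\[
|x'|^{\OO(g(k'))} \;\leq\; \bigl(h_1(k)\bigr)^{\OO(g(k'))}\cdot |x|^{c\,\OO(g(k'))}.
\]
Using $k'\leq h_2(k)$ together with the monotonicity of $g$ yields $g(k')\leq g(h_2(k))$, and similarly $f(k')\leq f(h_2(k))$. Folding all parameter-only factors, namely $h_1(k)$, $f(h_2(k))$, and $h_1(k)^{\OO(g(h_2(k)))}$, into a single computable $F(k)$, and setting $G(k):=g(h_2(k))$, which remains non-decreasing as a composition of non-decreasing functions, the total running time collapses to $F(k)\cdot|x|^{\OO(G(k))}$, exactly of the promised form. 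I do not anticipate any genuine obstacle: the statement is essentially an unwinding of the definition of a parameterized reduction. The only subtlety worth attention is keeping the factor $h_1(k)^{\OO(g(k'))}$ inside the parameter-dependent prefactor, rather than accidentally letting it inflate the base of the polynomial in $|x|$.
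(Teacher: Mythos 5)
The paper does not actually prove this observation --- it is stated as a quoted fact with a citation to \cite{cygan2015parameterized} --- so there is no in-paper argument to compare against; your proof is the standard composition argument from that reference and is essentially correct. The one imprecision is the step $f(k')\leq f(h_2(k))$: the statement only assumes $f$ is \emph{computable}, not non-decreasing, so you should first replace $f$ by $\hat f(k):=\max_{j\leq k}f(j)$ (and likewise ensure $h_2$ is non-decreasing before forming $G=g\circ h_2$); with that routine adjustment the bookkeeping goes through exactly as you describe, and you correctly identify the only real subtlety, namely keeping the factor $h_1(k)^{\OO(g(k'))}$ in the parameter-dependent prefactor rather than in the base of the polynomial in $|x|$.
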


We define \ExpoTH (\ETH) in Definition~\ref{def:eth} of Section~\ref{sec:prelim}. The following theorem is a conjecture on \dos, assuming \ETH.
\begin{theorem}(~\cite{cygan2015parameterized})\label{thm:dos-ETH}
	Assuming \ETH, there is no $f(k)n^{o(k)}$-time algorithm for the \dos problems, where f is any computable function.
\end{theorem}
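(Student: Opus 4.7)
The plan is to derive this lower bound for \dos from the analogous ETH-based lower bound for parameterized \textsc{Clique}, which in turn comes from ETH via the Sparsification Lemma and the standard reduction from 3-SAT to \textsc{Clique}. Concretely, under ETH, $k$-\textsc{Clique} admits no $f(k)n^{o(k)}$-time algorithm, and I would take this as the starting point for a chain of parameterized reductions (the exponent $\OO(k)$ in the stated theorem is meant to be read as $o(k)$, since an $n^{\OO(k)}$ algorithm is trivially attainable by brute force).

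First I would exhibit a parameterized reduction from $k$-\textsc{Multicolored Clique} to $k'$-\dos in which $k' = \Theta(k)$. Given a graph $H$ on $n$ vertices with vertex set partitioned into color classes $V_1,\ldots,V_k$, I would build a graph $G$ from two families of gadgets: for each $i\in[1,k]$ a selection gadget forcing any small \dos to pick exactly one ``representative'' of $V_i$, and for each unordered pair $\{i,j\}$ a verification gadget that becomes dominated precisely when the chosen representatives of $V_i$ and $V_j$ are adjacent in $H$. Using private neighbors on the selection side and shared dominated structures on the verification side, each verification gadget contributes $\OO(1)$ vertices to a minimum \dos, so $G$ admits a \dos of size $k' = \Theta(k)$ iff $H$ has a multicolored $k$-clique, and $|V(G)|$ remains polynomial in $n$ and $k$.

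Combining the two: suppose for contradiction that \dos admits an $f(k)n^{o(k)}$-time algorithm. Running it on the reduced instance, whose size is polynomial in $n$ and whose parameter equals $k' = ck$ for an absolute constant $c$, would decide $k$-\textsc{Multicolored Clique} in time
\[
f(ck)\cdot\mathrm{poly}(n)^{o(ck)} \;=\; g(k)\cdot n^{o(k)},
\]
contradicting the ETH lower bound for \textsc{Clique} (which transfers from \textsc{Clique} to its multicolored version by a standard colour-coding argument with a $2^{\OO(k)}$ overhead, negligible on the right-hand side).

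The hard part, and the technical heart of the argument in~\cite{cygan2015parameterized}, is keeping the parameter blow-up strictly linear. A naive reduction that introduces a separate gadget for each of the $\binom{k}{2}$ pairs forces $k' = \Theta(k^2)$ and only rules out $n^{o(\sqrt{k})}$-time algorithms. Attaining $k' = \OO(k)$ requires sharing gadgetry across pairs and a careful counting argument certifying that the tight bound $|D| = k'$ is achievable only by a dominating set that encodes a genuine multicolored clique, ruling out ``mixed'' solutions that double up in some color class. Verifying this tightness is the step I would spend the most care on; once it is in place the ETH contradiction is a mechanical substitution.
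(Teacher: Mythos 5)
This theorem is not proved in the paper at all: it is quoted verbatim (with a citation) from~\cite{cygan2015parameterized}, so there is no in-paper proof to compare against. Judged against the argument in that reference, your plan is the standard one and is essentially correct in outline: ETH gives the $f(k)n^{o(k)}$ lower bound for \textsc{Clique} via the Sparsification Lemma, this transfers to \textsc{Multicolored Clique}/\textsc{Multicolored Independent Set} with only a $\mathrm{poly}(k)$ or $2^{\OO(k)}$ overhead, and a parameter-preserving reduction to \dos finishes the job. You are also right that the exponent in the statement must be read as $o(k)$ rather than $\OO(k)$ (brute force gives $n^{\OO(k)}$ trivially); the paper itself uses $n^{o(k)}$ in its own Theorem~\ref{thm:doskp-ETH}.

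The one substantive issue is internal to your sketch of the reduction. You introduce a verification gadget for each of the $\binom{k}{2}$ pairs of colour classes and say each such gadget ``contributes $\OO(1)$ vertices to a minimum \dos''; taken literally this yields $k'=\Theta(k^2)$, which is exactly the quadratic blow-up you later warn against, and it would only exclude $n^{o(\sqrt{k})}$-time algorithms. You flag this tension but leave its resolution as the hard part. In the construction of~\cite{cygan2015parameterized} the resolution is simpler than ``sharing gadgetry'': the verification vertices contribute \emph{zero} vertices to the dominating set. For each colour class $V_i$ one adds two pendant-like vertices forcing the solution to take exactly one representative from $V_i$, and for each edge (or non-edge) $uv$ with $u\in V_i$, $v\in V_j$ one adds a single vertex adjacent to all of $(V_i\cup V_j)\setminus\{u,v\}$; such a vertex is dominated by the chosen representatives precisely when the selection avoids the forbidden pair, and it is never itself selected. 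Hence the parameter is exactly $k$, not merely $\Theta(k)$, and the ``tightness'' counting you anticipate spending the most care on reduces to the observation that a size-$k$ dominating set must take one vertex per class. With that correction your argument goes through as a mechanical substitution, as you say.
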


We proved the following theorem by using Theorem~\ref{thm:dos-ETH}.

\begin{theorem}\label{thm:doskp-ETH}
	Unless \ETH fails, there is no $f(k)n^{o(k)}$ -time algorithm for \dosknapsack and \kdosknapsack for any non-decreasing function $f$ parameterized by the solution size k.
\end{theorem}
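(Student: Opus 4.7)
The plan is to lift the ETH-based lower bound for \dos (Theorem~\ref{thm:dos-ETH}) to \dosknapsack through the parameterized reduction constructed in Theorem~\ref{thm:doskp-FPT-reduction}, using the transfer principle articulated in Observation~\ref{obs:reduction-for-ETH}. The strategy is a standard contrapositive argument: assume such an $f(k)n^{o(k)}$-time algorithm for \dosknapsack exists, feed every instance of \dos through the reduction, and derive an $f'(k)n^{o(k)}$-time algorithm for \dos, contradicting Theorem~\ref{thm:dos-ETH}.

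The first step is to recall precisely what Theorem~\ref{thm:doskp-FPT-reduction} delivers: given a \dos instance $(\GG,k)$ on $n$ vertices, it produces in polynomial time a \dosknapsack instance whose underlying graph still has $n$ vertices (a simple copy of $\GG$), with unit weights and profits, budget $s = k$, and target $d = k$, and crucially the new parameter satisfies $k' = k$. The reduction is size-preserving up to a polynomial factor and, more importantly, \emph{parameter-preserving exactly}. This linear (indeed identity) dependence between $k'$ and $k$ is what will allow an $n^{o(k')}$ bound in the target problem to translate back to an $n^{o(k)}$ bound in the source problem without any blow-up in the exponent.

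The second step is the contradiction. Suppose for contradiction that there exist a computable function $f$ and an algorithm $\AA$ that solves \dosknapsack in time $f(k')\cdot n^{o(k')}$. To decide an arbitrary \dos instance $(\GG,k)$, run the reduction of Theorem~\ref{thm:doskp-FPT-reduction} in polynomial time to obtain an equivalent \dosknapsack instance of size $n^{\OO(1)}$ with parameter $k' = k$, then invoke $\AA$. The total runtime is
\[
n^{\OO(1)} + f(k)\cdot \bigl(n^{\OO(1)}\bigr)^{o(k)} = f(k)\cdot n^{o(k)},
\]
which would yield an $f(k)\cdot n^{o(k)}$-time algorithm for \dos, directly contradicting Theorem~\ref{thm:dos-ETH} (and hence ETH, via Definition~\ref{def:eth}).

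The main obstacle to watch out for is purely bookkeeping: verifying that the polynomial size blow-up of the reduction does not ruin the $o(k)$ shape of the exponent. Since $(n^{c})^{o(k)} = n^{c\cdot o(k)} = n^{o(k)}$ for any fixed constant $c$, and since the reduction here in fact preserves the vertex count exactly, this step is immediate; no more delicate linear-parameter-preserving reduction is required. Therefore the contradiction goes through, and the claimed lower bound follows.
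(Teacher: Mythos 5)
Your proposal is correct and follows essentially the same route as the paper: a contradiction argument that pushes a hypothetical $f(k)n^{o(k)}$-time algorithm for \dosknapsack back through the parameter-preserving reduction of Theorem~\ref{thm:doskp-FPT-reduction} (via Observation~\ref{obs:reduction-for-ETH}) to contradict Theorem~\ref{thm:dos-ETH}. Your write-up is in fact more careful than the paper's, since you explicitly verify that $k'=k$ and that the polynomial size blow-up preserves the $n^{o(k)}$ shape of the exponent.
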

	\begin{proof}
		Let $(G,k)$ and $(G',k)$ be the instances of the \dos and \dosknapsack, respectively. From Theorem~\ref{thm:doskp-FPT-reduction} above, we have a parameterized reduction from \dos to \dosknapsack. Now, we prove this theorem by contradiction. Suppose \dosknapsack has a sub-exponential time algorithm with complexity $f(k)n^{o(k)}$. Now, from the Observation~\ref{obs:reduction-for-ETH} we can say that \dos problem also admits $f(k)n^{o(k)}$-time algorithm which contradicts the Theorem~\ref{thm:dos-ETH}.
\end{proof}
\subsection{Solution size of \vcknapsack}\label{subsec:param-vck} 
The next parameter we consider here is \vcknapsack (Definition~\ref{def:vckp}), which is defined in Section~\ref{sec:prelim}. For more results, we refer to the literature~\cite{dey2024knapsackwith}.

Here, we describe two simple observations for \vcknapsack.
\begin{observation}
	The solution of~\vcknapsack for connected graphs is also a solution of~\dosknapsack, but the converse is not true.
\end{observation}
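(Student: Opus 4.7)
The plan is to prove the forward implication by invoking the classical fact that, in a graph without isolated vertices, every vertex cover is also a dominating set, and then to refute the converse with a small explicit example.

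For the forward direction, let $\UU \subseteq \VV$ be a solution of \vcknapsack, so that $\UU$ is a vertex cover with $\sum_{u \in \UU} w(u) \leq s$ and $\sum_{u \in \UU} \alpha(u) \geq d$. I would show $\UU$ is a dominating set as follows: pick any $v \in \VV \setminus \UU$; since (in the absence of isolated vertices) $v$ has at least one incident edge $\{v,u\} \in \EE$, and $\UU$ covers this edge, the other endpoint $u$ must lie in $\UU$. Hence $v$ has a neighbor in $\UU$, which establishes that $\UU$ dominates every vertex outside it. The two numerical constraints on weight and profit are identical across the two problem statements (\cref{def:vckp} and \cref{def:doskp}), so they transfer without modification. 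Therefore $\UU$ is a valid \dosknapsack solution.

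For the converse, I would exhibit a small counterexample. Take $\GG = P_4$ on vertices $v_1, v_2, v_3, v_4$ with edges $\{v_1,v_2\}, \{v_2,v_3\}, \{v_3,v_4\}$, setting $w(v_i) = \alpha(v_i) = 1$ for all $i$, $s = d = 2$. The set $\UU = \{v_1, v_4\}$ dominates $v_2$ (through $v_1$) and $v_3$ (through $v_4$), so it is a dominating set of total weight $2 \leq s$ and total profit $2 \geq d$, hence a \dosknapsack solution. However, the edge $\{v_2, v_3\}$ is not covered by $\UU$, so $\UU$ is not a vertex cover, and therefore not a \vcknapsack solution.

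The only genuine obstacle is the minor technicality of isolated vertices: an isolated vertex may be omitted from a vertex cover but must lie in every dominating set, so the inclusion strictly requires either the assumption that $\GG$ has no isolated vertices or a natural preprocessing step that places all isolated vertices into $\UU$ (which preserves the weight bound only if their weights are budget-compatible). I would mention this briefly, as it is standard and does not affect the substance of the observation.
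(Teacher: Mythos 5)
Your proof is correct, and it follows the only natural route: the folklore fact that every \vc in a graph without isolated vertices is a \dos, plus a $P_4$ counterexample for the converse. The paper states this observation without any proof at all, so there is nothing to compare against; your write-up simply supplies the missing justification. Your caveat about isolated vertices is well taken and is in fact corroborated by the paper's own second observation, which notes that on an edgeless graph \vcknapsack degenerates to \kp while \dosknapsack is forced to take every vertex --- so the stated inclusion genuinely fails there, and the observation should be read as implicitly assuming no isolated vertices (or as requiring the preprocessing you describe).
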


\begin{observation}
	If we consider a graph with no edges, then the \vcknapsack is just nothing but the \kp. But in this situation, the \dosknapsack must choose all the vertices, and it has a solution if the sum of all weights of the vertices is $\leq s$. Otherwise, there is no solution. So, for the graph with no edges, \dosknapsack is solvable in polynomial time, whereas \vcknapsack is solvable in pseudo-polynomial time.
\end{observation}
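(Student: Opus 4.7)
The plan is to verify each of the three assertions in the observation by unpacking the two definitions once $\EE=\emptyset$. First, I would analyze \vcknapsack on an edgeless graph: since there are no edges, the vertex cover constraint (condition (i) of Definition~\ref{def:vckp}) is satisfied vacuously by every $\UU\subseteq\VV$, so the instance reduces bijectively to the pure \kp instance with item set $\VV$ and the same weights, profits, $s$, $d$. Since \kp is \NPH and currently admits only a pseudo-polynomial dynamic programming solution, the same bounds are inherited by \vcknapsack restricted to this graph class.

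Second, for \dosknapsack on the same edgeless instance, I would establish the structural claim that the \emph{only} dominating set of $\GG$ is $\VV$ itself. Suppose for contradiction that some $v\in\VV$ is omitted from a candidate set $\UU$; since $v$ has no neighbor in $\GG$, no vertex of $\UU$ lies in $N(v)$, so $v$ is dominated by no element of $\UU$, contradicting condition (1) of Definition~\ref{def:doskp}. Hence $\UU=\VV$ is forced, and the instance is a \yes instance iff $\sum_{v\in\VV}w(v)\le s$ and $\sum_{v\in\VV}\alpha(v)\ge d$ hold simultaneously.

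Combining the two parts yields the stated complexity gap: the two inequalities above are checkable in $\OO(n)$ arithmetic operations, so \dosknapsack on edgeless graphs is in polynomial time, whereas \vcknapsack on edgeless graphs inherits the pseudo-polynomial (and $\NPH$) status of \kp. I do not expect any real obstacle: the entire argument is a definition check, and the only subtlety worth stating carefully is the quantifier in the definition of dominating set, which forces every isolated vertex into every dominating set and thereby collapses the \dosknapsack optimization to a single Boolean test rather than a genuine knapsack computation.
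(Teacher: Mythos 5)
Your proposal is correct and follows exactly the reasoning the paper intends: the paper states this as an unproved observation, and your argument simply makes explicit the two definition checks it relies on (the vertex cover condition is vacuous when $\EE=\emptyset$, and every isolated vertex must belong to any dominating set, forcing $\UU=\VV$). Nothing is missing; the elaboration that the \dosknapsack instance collapses to two linear-time inequality tests is precisely the point of the observation.
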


The next theorem shows the running time of the algorithm for \dosknapsack where \vcknapsack is a parameter.
\begin{theorem}\label{thm:doskp-vc_kp}
	There is an algorithm of \dosknapsack with running time $\OO(2^{vck-1}\cdot n^{\OO(1)} \min\{s^2,{\alpha(\VV)}^2\})$ parameterized by solution size of the \vcknapsack $vck$, where $s$ is the size of the knapsack and $\alpha(\VV)=\sum_{v\in\VV}\alpha(v)$.
\end{theorem}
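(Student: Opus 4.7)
The plan is to exploit the structural property that a \vcknapsack solution of size $vck$ yields a vertex cover $C \subseteq \VV$ with $|C| = vck$; the complement $I = \VV \setminus C$ is then necessarily an independent set. First, I would extract such a $C$ from the given \vcknapsack solution (the weight/profit constraints of \vcknapsack are irrelevant for the cover property and can be discarded). Because $I$ is independent, every $v \in I$ has its entire neighbourhood contained in $C$, which is the decoupling property my algorithm relies on.

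Next, I would branch over all $2^{vck}$ possible intersections $D_C = D \cap C$ of a candidate dominating set $D$ with $C$. For each fixed $D_C$, the remaining structure is tightly constrained: every $v \in I$ with $N(v) \cap D_C = \emptyset$ must be forced into $D$ (its only potential dominators lie in $C$), yielding a mandatory subset $D_I^{\text{must}} \subseteq I$. The only freedom concerns the optional vertices of $I \setminus D_I^{\text{must}}$, which may be added either to accrue additional profit or to dominate the residual set $R' \subseteq C \setminus D_C$ of vertices not yet covered by $D_C$.

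For this residual step I would encode the induced bipartite structure between $R'$ and the optional candidates in $I$ as an instance amenable to the dynamic program of Theorem~\ref{thm:doskp-trees-pseudo-poly}: build an auxiliary tree rooted at a dummy node, attach each $R'$-vertex as a child of the root, and hang its optional $I$-neighbours beneath it, using zero-weight and zero-profit duplicate copies when a single $I$-vertex serves as neighbour of multiple $R'$-vertices so that each true weight and profit is charged exactly once. Running the tree DP then yields a table of undominated weight-profit pairs in $\OO(n \cdot \min\{s^2, (\alpha(\VV))^2\})$ time per branch. Summing over all branches gives $\OO(2^{vck}\cdot n^{\OO(1)} \min\{s^2,(\alpha(\VV))^2\})$; the extra factor of $1/2$ claimed in the statement comes from fixing the membership of one designated vertex of $C$ in $D_C$ without loss of generality (branching symmetrically on that vertex and merging the two symmetric halves of the enumeration).

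The main obstacle is that the residual requirement ``every element of $R'$ must be dominated by at least one chosen optional $I$-vertex'' is a \emph{set-cover}-type constraint that is not native to the tree DP of Theorem~\ref{thm:doskp-trees-pseudo-poly}, which only enforces domination along tree edges. Careful design of the auxiliary tree — in particular, the duplication mechanism together with a marker on the zero-weight copies — is where most of the technical effort will go, since I must ensure simultaneously that (i) a chosen $I$-vertex covers all of its $R'$-neighbours in one shot, (ii) its weight and profit are added to the knapsack accounting exactly once, and (iii) the ``undominated weight-profit pair'' update rule of the tree DP correctly certifies a genuine dominating set of the original graph $\GG$.
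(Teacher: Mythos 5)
Your overall strategy---enumerate the $2^{vck}$ possible intersections $D_C$ of the dominating set with the vertex cover, force the independent-set vertices with no chosen neighbour, and delegate the remaining weight--profit bookkeeping to the tree DP of Theorem~\ref{thm:doskp-trees-pseudo-poly}---is essentially the paper's strategy, which likewise invokes that DP once per subset $\SS_c$ of the cover (processing each subset together with its complement to arrive at the $2^{vck-1}$ count). However, there is a genuine gap exactly at the point you yourself flag as ``where most of the technical effort will go'', and the mechanism you sketch does not close it. After fixing $D_C$, the requirement that every residual vertex of $R'$ be dominated by some chosen optional vertex of $I$ is a weighted \textsc{Set Cover} constraint (red--blue domination on the bipartite graph between $R'$ and $I$). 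Your auxiliary tree hangs, for an optional $v\in I$ adjacent to several $R'$-vertices, one true copy and several zero-weight, zero-profit duplicates. The tree DP treats these copies as independent decisions: it may select only the free duplicate of $v$ under $r_2$ to dominate $r_2$ while leaving the true copy under $r_1$ unselected, yielding a weight--profit pair that charges nothing for $v$ and certifies no dominating set of $\GG$; conversely, forcing all copies of $v$ to be selected or deselected together is a consistency constraint spanning different branches of the tree, which is precisely what a tree-structured DP whose state is local to a subtree cannot express. A ``marker'' on the copies does not repair this. So conditions (i)--(iii) in your own obstacle list remain unestablished and the proof is incomplete.

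For comparison, the paper sidesteps the encoding problem by brute force rather than by a gadget: it runs the tree DP on the star-like structure with $\SS_c$ as a collapsed root and all of $\VV\setminus\SS$ as children, and before admitting any weight--profit pair into a table it spends $\OO(n)$ time verifying that the associated vertex set really is a dominating set of $\GG$, discarding pairs that are not. If you wish to keep your cleaner forced/optional decomposition, you need either such an explicit verification step or a genuinely different subroutine for the residual covering (for instance a DP over the at most $2^{|R'|}$ subsets of $R'$ covered so far---in which case you must check the combined running time still matches the claimed bound). Separately, your justification of the exponent $vck-1$ is incorrect as stated: fixing the membership of one designated cover vertex in $D_C$ is not without loss of generality, since an optimal dominating set need not contain (or exclude) that vertex; the halving in the paper comes only from handling each subset and its complement in one call.
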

\begin{proof}
	Let $(\GG=(\VV,\EE),(w(u))_{u\in\VV},(\alpha(u))_{u\in\VV},s,d)$ be the given instance of \dosknapsack and $\SS\subseteq\VV$ with $|\SS|=vck$ be the given solution on the same instance for the \vcknapsack that satisfies the following conditions: $\sum_{u\in\SS} w(u) \le s , \sum_{u\in\SS} \alpha(u) \ge d$. Since $\SS$ is a \vc of $\GG$, it covers all the edges $\EE$ of $\GG$. All the vertices $\VV\setminus\SS$ are connected with any of the vertices of $\SS$ just like pendant or non-pendant vertices (See, Figure~\ref{vck}). So, we can consider this structure as a tree with $\SS$  as the root node with $t(=|\VV|-vck)$ number of leaf nodes. Now, we use a dynamic programming approach by calling the pseudo-polynomial time Algorithm~\ref{thm:doskp-trees-pseudo-poly} for every subset $\SS_c$ of $\SS$ by considering $\SS_c$ as the root and $\VV\setminus\SS$ as the children. 
	\begin{minipage}[t]{.75\textwidth}
		The total number of subsets of $\SS$ is $2^{vck}$ and  We compute dp tables $D[\SS_c, t]$ and $D[\bar{\SS_c}, t]$ for each subset $\SS_c$ of $\SS$, where $\bar{\SS_c}=\SS\setminus\SS_c$. So, we need to call the Algorithm~\ref{thm:doskp-trees-pseudo-poly} in $\frac{2^{vck}}{2}=2^{vck-1}$ times. The running time of the Algorithm~\ref{thm:doskp-trees-pseudo-poly} is $\OO(n \min\{s^2,{\alpha(\VV)}^2\})$. One little modification here is that before adding each weight-profit pair in the dp tables  $D[\SS_c, t]$ or $D[\bar{\SS_c}, t]$, we need to verify that the associated vertices corresponding to the weight-profit pair must form a \dos of $\GG$. This verification takes time $\OO(n)$. Each dp table $D[\SS_c, t]$ or $D[\bar{\SS_c}, t]$ can contain at most $\min\{s,{\alpha(\VV)}\}$ weight-profit pairs. In the final step we take union of $D[\SS_c, t]\cup D[\bar{\SS_c}, t]$ for all  subset $\SS_c$ of $\SS$.
	\end{minipage}%
	\raisebox{-.1\textheight}{\rule{0pt}{.1\textheight}}
	\hfill
	\begin{minipage}[t]{.20\textwidth}\label{vck}
		\begin{figure}[H]
			\centering
			\includegraphics[height=1.5cm,width=2.5cm]{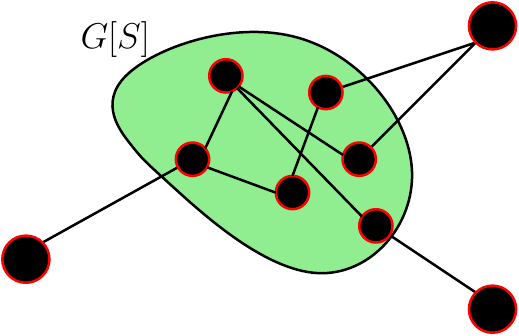}
			\caption{Subgraph $\GG[\SS]$ induced by the solution of \vcknapsack}
			\label{vck}
		\end{figure}
	\end{minipage}%
	
	After removing all dominated and duplicated weight-profit pairs we get the resulting pair $(w_c,\alpha_c)$, where $\mathlarger{\max}_{\alpha_c\geq \alpha} ~\{\mathlarger{\bigcup}_{\substack{ (w,\alpha)\in D[\SS_c, t]\cup D[\bar{\SS_c}, t]\\\forall\SS_c\subseteq\SS}} (w,\alpha) \}$. Thus the total running time is $\OO(2^{vck-1}\cdot n^{\OO(1)} \min\{s^2,{\alpha(\VV)}^2\})$.
\end{proof}
\begin{corollary}
	\dosknapsack is \FPT parameterized by the solution size of \vcknapsack.
\end{corollary}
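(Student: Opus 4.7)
The plan is simple: this corollary is an immediate packaging of Theorem~\ref{thm:doskp-vc_kp}, so no new algorithmic idea is needed. First I would recall the running time established there, namely $\OO(2^{vck-1}\cdot n^{\OO(1)}\min\{s^2,\alpha(\VV)^2\})$. Then I would factor this bound as a product $f(vck)\cdot h(|I|)$, where $f(vck)=2^{vck-1}$ is a computable function depending only on the parameter, and $h(|I|)=n^{\OO(1)}\cdot\min\{s^2,\alpha(\VV)^2\}$ is polynomial in the size of the input (the numeric quantity $s$ and the individual profits $\alpha(u)$ contribute to $|I|$ under the same pseudo-polynomial convention adopted throughout Section~\ref{Se: Res-Param-Comp}). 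Since this product is precisely of the shape $f(k)\cdot|I|^{\OO(1)}$ demanded by the definition of fixed-parameter tractability, the corollary follows.

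The one piece of housekeeping I would spell out is where the \vcknapsack solution itself comes from: Theorem~\ref{thm:doskp-vc_kp} is stated assuming such a solution $\SS$ of size $vck$ is already given alongside the instance. To make the parameterization ``solution size of \vcknapsack'' self-contained, I would note that one may either regard $\SS$ as part of the input (a standard convention when the parameter is the size of a known structural solution) or prepend an FPT subroutine that either finds such $\SS$ or certifies its nonexistence, using the machinery referenced in \cite{dey2024knapsackwith}; either way the composed algorithm remains of the form $f(vck)\cdot|I|^{\OO(1)}$.

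I do not expect any real obstacle: the heavy lifting---the dynamic program over subsets of $\SS$ plus the verification that the chosen vertices actually dominate $\GG$---is already discharged in the proof of Theorem~\ref{thm:doskp-vc_kp}. The corollary is essentially a definitional rephrasing of the running-time bound proved there.
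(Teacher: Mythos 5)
Your proposal is correct and matches the paper, which treats this corollary as an immediate consequence of Theorem~\ref{thm:doskp-vc_kp} without further argument: the running time $\OO(2^{vck-1}\cdot n^{\OO(1)}\min\{s^2,\alpha(\VV)^2\})$ is read off as $f(vck)\cdot|I|^{\OO(1)}$ under the same pseudo-polynomial convention the paper uses throughout. Your added remark about where the \vcknapsack solution $\SS$ comes from is sensible housekeeping but not something the paper itself addresses.
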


\subsection{Treewidth}\label{subsec:param-tw}
In this subsection, we consider \tw as the parameter. We define the \td, \ntd, and \tw as follows:
	\begin{definition}[\td]\label{def:td}
		Given a graph $\GG=(\VV,\EE)$. The set of vertices $\chi=\{X_i\}_{i\in V}$, where each $X_i\subseteq \VV$ is called a bag such that $T=(V, E)$ is a tree structure for the graph $\GG$ with $i$ as the nodes of $~T$. We define a \td $T^\chi$ of $~\GG$ having the following properties:
		\begin{itemize}
			\item[$\bullet$] $\bigcup\limits_{i\in V}X_i= \VV$.
			\item[$\bullet$] For each edge $e=\{v_1,v_2\} \in \EE$ there must be a bag $X_k$ such that $\{v_1,v_2\} \subseteq X_k$, where $k=|V|$.
			\item[$\bullet$] If a node $j$ lies on the path between the nodes $i$ and $k$, then $X_i \cap X_k\subseteq X_j $, which means that the set $\{i, j, k\}$ induces a connected subtree of $~T$,  $\forall~ i, j, k \in V$.
		\end{itemize}
		We denote an arbitrary instance of ~\td $T^\chi$ of~ $~\GG$ by the pair $\langle \chi, T \rangle$. The \textsc{width} of~ $T^\chi$ is equal to $\max_{i\in V}|X_i| -1$ and the \tw$(tw)$ holds the minimum width value among all $T^\chi$ of~ $\GG$. We use the term 'vertices' for the vertices of the graph $\GG$ and 'nodes' for the vertices of $~T$.
	\end{definition}
	
	\begin{definition}[\ntd]\label{def:ntd}
		A \td $T^\chi_r$ rooted at a node $r$ is called a \ntd if the following properties holds:
		\begin{itemize}
			\item[$\bullet$] Every node of $~T$ has at most two children.
			\item[$\bullet$] A node with no children is called \textbf{leaf node}. 
			\item[$\bullet$] A node $i$ with one child $j$ is either called\\ 
			(i) \textbf{insert $\backslash$ introduce node} if ~$X_i=X_j\setminus\{v\} $ and $|X_i|=|X_j|+1$, or \\
			(ii) \textbf{forget node} if~ $X_j=X_i\setminus\{v\}$ and $|X_j|=|X_i|+1$ , where $v\in \VV$.
			\item[$\bullet$] A node $i$ with two children $j$ and $k$ is called \textbf{join node} if~ $X_i=X_j=X_k$.
		\end{itemize}
		
	\end{definition} 
The following lemma is the relation between the \td and \ntd.
\begin{lemma}\label{lm:ntd}(\cite{kloks1994treewidth})
	If the \td of width $k$ and $\OO(n)$ nodes are given for a graph $\GG$ with $n$ vertices, one can transform it into a \ntd of width  $k$ in time $\OO(n)$ having $\OO(n)$ number of nodes. 
\end{lemma}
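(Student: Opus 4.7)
The plan is to carry out the standard transformation of Kloks in three successive passes over the given tree decomposition $\langle \chi, T\rangle$, and to argue that each pass preserves the width and blows up the number of nodes by at most a factor depending only on $k$.

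First I would root $T$ at an arbitrary node $r$, so that every bag except the root has a unique parent. The first pass removes high-degree nodes: for every node $i$ having children $j_1,\dots,j_p$ with $p\ge 3$, I replace $i$ by a right-leaning binary tree whose $p-1$ internal nodes all carry the same bag $X_i$, and whose leaves are the original subtrees rooted at $j_1,\dots,j_p$. Each such internal node has exactly two children with the same bag as itself, matching the definition of a join node in \Cref{def:ntd}; the bags used are identical to $X_i$, so the width is unchanged. Since the total number of child pointers in $T$ is at most the number of edges of $T$, this pass produces $\OO(n)$ nodes.

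The second pass makes every parent--child pair differ by exactly one vertex. For each edge of the (binary) tree between parent $i$ and child $j$ with $X_i\ne X_j$, I insert a chain that first forgets, one by one, the vertices in $X_j\setminus X_i$ and then introduces, one by one, the vertices in $X_i\setminus X_j$ (or, if $i$ is a join node, inserts the symmetric chain above each of its two children so that the join bag coincides with its children's bags). Each inserted node has a bag that is a subset of $X_i\cup X_j$, so its size is at most $k+1$ and the width is preserved. The length of each chain is at most $|X_i|+|X_j|\le 2(k+1)$, and there are $\OO(n)$ edges, so this pass contributes $\OO(k\cdot n)$ nodes.

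The final pass normalizes the leaves: for every leaf $\ell$ whose bag $X_\ell$ is nonempty, I attach below $\ell$ a chain of introduce nodes building $X_\ell$ from the empty bag, adding at most $k+1$ nodes. The main obstacle in the argument is really just the bookkeeping: one must check, after each pass, that the three defining properties of a \td (vertex cover, edge cover, connectivity of the subtree containing any fixed vertex) are still satisfied, and that the newly introduced nodes each fit into one of the four categories (leaf, introduce, forget, join) of \Cref{def:ntd}. The first two properties are preserved because every new bag is a subset of the union of two old adjacent bags, and the connectivity property is preserved because within each inserted chain the target vertex appears in a contiguous prefix or suffix. Summing over the three passes, the total number of nodes is $\OO(n)+\OO(kn)+\OO(kn)=\OO(n)$ for fixed $k$, and the construction is clearly computable in linear time in the size of the output.
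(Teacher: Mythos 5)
The paper does not prove this lemma at all: it is imported verbatim from Kloks~\cite{kloks1994treewidth} as a black box, so there is no in-paper argument to compare yours against. Your sketch is the standard three-pass construction (binarize via join nodes, interpolate forget/introduce chains along edges, normalize leaves) found in Kloks and in Cygan et al., and it is essentially sound, including the two points that usually get glossed over: equalizing the bags above the two children of each join node, and checking that the connected-subtree property survives because each vertex occupies a contiguous prefix or suffix of every inserted chain. The one place where your accounting does not literally match the statement is the final bound: your construction yields $\OO(k\cdot n)$ nodes and roughly $\OO(k^2 n)$ time once bag manipulations are charged, and you recover $\OO(n)$ only by declaring $k$ fixed. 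That is consistent with how the lemma is actually used later (an extra factor polynomial in $tw$ disappears into the $n^{\OO(1)}$ of Theorem~\ref{thm:doskp-treewidth}), but if you want the literal $\OO(n)$-node, $\OO(n)$-time claim independent of $k$ you would need the sharper bookkeeping from Kloks' Lemma~13.1.3 (e.g.\ bounding the number of forget nodes by $n$ via the uniqueness of the topmost node of each vertex's subtree) rather than the crude per-edge chain-length bound of $2(k+1)$.
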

Alber et al. proved that the \dos problem is \FPT parameterized by \tw ~\cite{alber2002fixed}. They provided an algorithm with running time $\OO(4^{tw}\cdot n)$. Later, using the Fast Subset Convolution method, the running time improved in $\OO(3^{tw}\cdot n^{\OO(1)})$~\cite{cygan2015parameterized}.

The following theorem presents another result for the \dosknapsack problem when we consider \tw as the parameter. 

\begin{theorem}\label{thm:doskp-treewidth}
	There is an algorithm of \dosknapsack with running time $\OO(4^{tw}\cdot n^{\OO(1)} \min\{s^2,{\alpha(\VV)}^2\})$ parameterized by \tw $tw$, where $s$ is the size of the knapsack and $\alpha(\VV)=\sum_{v\in\VV}\alpha(v)$.
\end{theorem}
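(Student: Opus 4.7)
The plan is to adapt the classical $\OO(4^{tw}\cdot n)$ tree decomposition dynamic programme for \dos of Alber et al.~\cite{alber2002fixed}, decorating every DP state with a list of undominated weight-profit pairs in the spirit of Theorem~\ref{thm:doskp-trees-pseudo-poly}. First I would use Lemma~\ref{lm:ntd} to turn the input decomposition into a nice tree decomposition $\langle \chi, T \rangle$ of width $tw$ with $\OO(n)$ nodes. For each node $i$ with bag $X_i$ and each colouring $c : X_i \to \{B, W, G\}$ (read: $B$ = chosen into the partial dominating set, $W$ = not chosen but already dominated by the partial solution built in the subtree, $G$ = not chosen and not yet dominated), I would maintain a table $A_i[c]$ of undominated pairs $(w_c,\alpha_c)$ realisable by some partial solution $\SS \subseteq V_i$ consistent with $c$ that dominates every vertex of $V_i \setminus X_i$ lying outside $\SS$, where $V_i$ denotes the union of bags in the subtree rooted at $i$. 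Pruning dominated and duplicate pairs exactly as in Theorem~\ref{thm:doskp-trees-pseudo-poly} keeps $|A_i[c]| \leq \min\{s, \alpha(\VV)\}$.

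The transitions at leaf, introduce-vertex and forget-vertex nodes carry over directly from the \dos DP. At an introduce node for $v$, each child entry is branched on the three possible labels of $v$: $(w(v), \alpha(v))$ is added to the pair exactly when $v \mapsto B$, and each $G$-neighbour of $v$ in $X_i$ is flipped to $W$ accordingly; a label $W$ is admitted only if $v$ already has a $B$-neighbour inside $X_i$. At a forget node for $v$, entries with $c(v) = G$ are discarded (such partial solutions can no longer be extended to a valid dominating set) and the rest are projected onto $X_i \setminus \{v\}$ with duplicates merged. At a join node with $X_i = X_j = X_k$, I combine $(c_j, c_k)$ into a parent colouring $c$ by the Alber et al.\ rules: the $B$-sets must agree; $c(v) = G$ iff $c_j(v) = c_k(v) = G$; and $c(v) = W$ whenever at least one of $c_j(v), c_k(v)$ equals $W$. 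For every compatible triple and every $(w_1, \alpha_1) \in A_j[c_j]$, $(w_2, \alpha_2) \in A_k[c_k]$, I insert $(w_1 + w_2 - w_B, \alpha_1 + \alpha_2 - \alpha_B)$ into $A_i[c]$, where $(w_B, \alpha_B)$ is the contribution of the $B$-labelled vertices of $X_i$, subtracted to avoid double counting them.

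The main obstacle is the join step, which is exactly what forces the $4^{tw}$ factor. By the counting argument of~\cite{alber2002fixed}, iterating over all compatible colouring triples at a single join node contributes an $\OO(4^{tw+1})$ factor, and each such combination merges two lists of size at most $\min\{s, \alpha(\VV)\}$ in time $\OO(\min\{s^2, \alpha(\VV)^2\})$; pruning by sorting on weight and scanning for dominated pairs fits inside the same bound. Together with the $\OO(3^{tw})$ work at introduce and forget nodes and the $\OO(n)$ bag count provided by Lemma~\ref{lm:ntd}, the total running time is $\OO(4^{tw} \cdot n^{\OO(1)} \cdot \min\{s^2, \alpha(\VV)^2\})$. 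Correctness is by induction up the nice tree decomposition exactly as in~\cite{alber2002fixed}, the only novelty being the bookkeeping of weight-profit pairs, and the final answer is \yes iff the table at the root contains some pair $(w_c, \alpha_c)$ with $w_c \leq s$ and $\alpha_c \geq d$ whose colouring uses no $G$.
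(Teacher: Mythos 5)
Your proposal follows essentially the same route as the paper: convert to a nice tree decomposition via Lemma~\ref{lm:ntd}, run the three-colour (chosen / dominated / not-yet-dominated) dominating-set DP of Alber et al.~\cite{alber2002fixed}, and replace each Boolean table entry by a list of undominated weight--profit pairs pruned as in Theorem~\ref{thm:doskp-trees-pseudo-poly}, with the join node supplying the $4^{tw}$ factor and the list merges supplying the $\min\{s^2,\alpha(\VV)^2\}$ factor. One small caveat: your join rule ``$c(v)=W$ whenever at least one of $c_j(v),c_k(v)$ equals $W$'' also admits the child pair $(W,W)$, i.e.\ five tuples per bag vertex and a $5^{tw}$ count; to actually obtain $4^{tw}$ you must restrict, as the paper does, to the four tuples $(B,B,B)$, $(G,G,G)$, $(W,W,G)$, $(W,G,W)$ per vertex, which is precisely the counting argument of~\cite{alber2002fixed} that you invoke.
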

\begin{proof}
	Let $((\GG,(w(u))_{u\in\VV},(\alpha(u))_{u\in\VV},s,d),tw)$ be the given instance of our problem and $T^\chi=\langle \chi, T \rangle$ be the \ntd  of $\GG$ with width at most $tw$ computed from a given \td of $\GG$ with width at most $tw$ in polynomial time, according to the Lemma~\ref{lm:ntd}.
	
	Suppose $\VV=\{v_1, v_2, \ldots, v_n\}$ is the set of vertices of the graph $\GG$. We represent $\GG_t=(\VV_t,\EE_t)$ as the subgraph corresponding to the tree decomposition $(T_t)$ rooted at $t$ such that $\VV(\GG) \cap V(T_t)=\VV_t(\GG_t)$, and $\EE_t$ are the edges induced by $\VV_t$. Now, a bag $X_i$ represented as $X_i=(v_{i_1}, v_{i_2},\ldots, v_{i_{n_i}})$ and the colors of the vertices of a bag with the following colors: 
	\begin{itemize}
		\item[$\bullet$] \textbf{Black:} These vertices are in the partial solution in $\GG_t$ that forms dominating set. In our algorithm, we represent these vertices as "1".
		\item[$\bullet$] \textbf{White:} These vertices are dominated by some vertices that are in partial solution. Thus, "white" vertices are not part of the partial solution. We represent these vertices with "0".
		\item[$\bullet$]\textbf{Gray:} These vertices are not in the partial solution and not even dominated at the current state of the algorithm. We represent "gray" vertices with "$\hat{0}$".
	\end{itemize}
	Here, we define a function $\tilde{c}:v_{i_t}\rightarrow \{1, 0, \hat{0}\}$ such that $\tilde{c}(v_{i_t})=c_{i_t}\in \{1,0,\hat{0}\}$. Thus, the vector $c=(c_{i_1}, c_{i_2},\ldots, c_{i_{n_i}}) \in \{1,0,\hat{0}\}^{n_i}$ of dimension $|X_i|=n_i$ corresponding to the bag $X_i=(v_{i_1}, v_{i_2},\ldots, v_{i_{n_i}})$ represents a  \textit{coloring} of the bag $X_i$. A coloring $\hat{c}=\{\hat{c}_{i_1}, \hat{c}_{i_2}, \ldots, \hat{c}_{i_{n_i}}\}$ leads to \textit{no valid solution} for the bag $X_i$ if for all vertex $v_x \in X_i$ with color $\hat{c}_x=0$ where $x\in\{i_1,i_2,\ldots,i_{n_i}\}$ such that  $\tilde{c}(n(v_{x})\cap \VV_i) \neq 1 \forall$  $ n(v_{x})\in N(v_{x})$. Here, $N(v_{x})= \{n(v_{x}): \{n(v_{x}), v_{x}\}\in \EE(\GG) \}$ is denoted as the set of neighbors of $v_{x}$ and  $\VV_i$ is the set of vertices of the partial solution in $\GG_i$ such that corresponding tree decomposition $T^\chi_i$ rooted at $i$.
	
	For finding the solution \dosknapsack, our goal is to find \dos along with the condition that it gives us optimal profit, i.e., maximum $\alpha(\VV)=\sum_{v\in\VV}\alpha(v)$ within our budget $s$ constraint. The existing \tw algorithm yields the minimum cardinality of \ dos, but it is possible that it may or may not yield the maximum profit. On the other hand, each vertex is associated with a weight, and our budget $s$ is fixed. So, it is not possible to add more vertices to the \kp to maximize the profit.
	
	We use dynamic programming and define the table $D$. The table entry $D[i,c]$ represents the set of undominated weight-profit pairs for the coloring $c$ and the node $i\in V(T)$ associated with the bag $X_i$. The algorithm chooses a node $i$ by applying DFS on $T^\chi_r$ rooted at $r$.
	\begin{itemize}
		\item[$\bullet$] If $i$ is \textit{leaf node}, the initialization for all the \textit{coloring} $c=(c_{i_1}, c_{i_2},\ldots, c_{i_{n_i}}) \in \{1,0,\hat{0}\}^{|X_i|}$, where  $X_i=(v_{i_1}, v_{i_2},\ldots, v_{i_{n_i}})$ :
		
		{\scriptsize\[D[i,c] = \begin{cases}
				\emptyset & \text{if $c$ loads to \textit{no valid solution}}\\
				\left(\mathlarger{\sum}_{\substack{c_j=1\\j\in \{i_1,\ldots, i_{n_i}\} }} w(v_j), \mathlarger{\sum}_{\substack{c_j=1\\j\in \{i_1,\ldots, i_{n_i}\} }} \alpha(v_j)\right) & \text{if $c$ valid and $\mathlarger{\sum}_{\substack{c_j=1\\j\in \{i_1,\ldots, i_{n_i}\} }} w(v_j)\leq s$}\\ \emptyset & \text{otherwise}
			\end{cases}\]
			
		}
		Here, computing summations for each $c\in {\{1,0,\hat{0}\}^{|X_i|}}$ executes $\OO(|X_i|)$ times. That is, in total $\OO(3^{|X_i|}\cdot |X_i|)$ time required to compute summations. Also, checking each weight-profit pair for all $D[i,c]$ takes $\OO(3^{|X_i|})$ times for all coloring $c$, as each $D[i,c]$ contains at most one such pair, in this step. Thus the total time for this initialization step takes $\OO(3^{|X_i|}\cdot |X_i|)$.\\
		
		\item[$\bullet$] If $i$ is \textit{forget node} with one child $j$:\\ 
		Let the bag corresponding to the node $i$ be $X_i$, and the child bag of $X_i$ is $X_j=X_i\cup\{x\}$. Suppose $c$ and $c'=(c\times\{x\})$ is the colorings for $X_i$ and $X_j$ respectively and the possible colors for $x:\tilde{c}(x)\in \{1,0,\hat{0}\}$. Then we updated $D[i,c]$ according to as follows: \[D[i,c]=\bigcup_{\tilde{c}(x)\in \{1,0\}} D[j,c']\] 
		We can refrain to gather the values of $D[j,c']$ for the color $\hat{0}$ of $x$ as according to the Definition~\ref{def:td} of \td that if a vertex $x$ which is unresolved $(\hat{0})$ and removed for a node at the current state, will never appear again for the rest of the algorithm to contribute a role to compute the \dos. Computation of $D[i,c]$ for all coloring $c$ takes $\OO(3^{|X_i|})$ times. Then we remove all \textit{undominated weight-profit pairs} from each $D[i,c]$ that contains at most $\min\{s,\alpha(\VV)\}$ pairs takes $\OO(3^{|X_i|}\cdot \min\{s,\alpha(\VV)\})$ times $\forall c\in \{1,0,\hat{0}\}^{|X_i|}$. Thus, the execution of the \textit{forget node} operation takes $\OO(3^{|X_i|}\cdot \min\{s,\alpha(\VV)\})$ times in total.\\
		
		\item[$\bullet$] If $i$ is \textit{introduce node} with one child $j$:\\ 
		The corresponding bags for the node $i$ and $j$ are $X_i$ and $X_j$ respectively and $X_i=X_j\cup \{x\}$. So, $X_j\subset X_i$. Let $c'$  and $c=(c'\times\{x\})$ are the colorings of $X_j$ and $X_i$ respectively. The coloring for $x$ are: $\tilde{c}(x)\in \{1,0,\hat{0}\}$. Then
		{\scriptsize\[D[i,c] = \begin{cases}
				D[j,c'] & \text{if} ~\{\tilde{c}(x)=\hat{0}\}\lor\{\exists ~n(x)\in N(x)\cap X_j: \tilde{c}(x)=0\land  \tilde{c}(n(x))=1\}\\ (w'+w(x),\alpha'+\alpha(x)), & \text{if} ~\{\tilde{c}(x)=1\}\land \{w'+w(x)\leq s~\forall(w',\alpha')\in D[j,c'_{0\rightarrow\hat{0}}]\} \\\emptyset & \text{otherwise}
			\end{cases}\]}
		Here, $c'_{0\rightarrow\hat{0}}$ is a coloring that we get by replacing all $0$ colors in $c'$ with $\hat{0}$, which means that for finding $D[i,c]$, we use the coloring $c'_{0\rightarrow\hat{0}}$ instead of $c'$ corresponding to the coloring of $c$. For example, consider $c=(\hat{0},0,1,0)$ and $(\hat{0},\hat{0},1,\hat{0})$ for both of these the corresponding vector will be chosen from $c'_{0\rightarrow\hat{0}}$ is $(\hat{0},\hat{0},1,\hat{0})$.   We need to fill the table $D[i,c]$ for all $c\in \{1,0,\hat{0}\}^{|X_i|}$ and check the validity of each vertex in $X_i$ for the color $\tilde{c}(x)\in \{0,\hat{0}\}$. For the color $\tilde{c}(x)=1$, $D[j,c'_{0\rightarrow\hat{0}}]$ contains at most $\min\{s,\alpha(\VV)\}$ pairs. So total time to compute $D[i,c]$ is $\frac{2}{3}\cdot3^{|X_i|}\cdot|X_i|+\frac{1}{3}\cdot3^{|X_i|}\cdot \min\{s,\alpha(\VV)\}=\OO(3^{|X_i|}\cdot \min\{s,\alpha(\VV)\})$. Ultimately, removing all undominated weight-profit pairs also takes time $\OO(3^{|X_i|}\cdot \min\{s,\alpha(\VV)\})$. So total runtime for introduce node $\OO(3^{|X_i|}\cdot \min\{s,\alpha(\VV)\})$.\\
		
		\item[$\bullet$] If $i$ is \textit{join node} with two children $j$ and $k$:\\ 
		The corresponding bags for $i,j,k$ are $X_i,X_j,X_k$ respectively. By definition $X_i=X_j=X_k=(v_{i_1}, v_{i_2},\ldots, v_{i_{n_i}})$. Let $c\in \{1,0,\hat{0}\}^{|X_i|}$ a coloring for $X_i$. Similarly, $c',c''\in \{1,0,\hat{0}\}^{|X_i|}$ are represent coloring for $X_j$ and $X_k$ respectively. Now we define \textit{consistency} of the colorings $c'$ and $c''$ with the color $c$. To do so, consider $c_t, c_t'$, and $c_t''$ are single components (or "a color") of the vectors $c, c', c''$ respectively. If $c'$ and $c''$ are \textit{consistent} with the color $c$ then the following conditions hold:
		\begin{itemize}
			\item[1.] $c_t'$ = $c_t'' = c_t$ iff $c_t\in \{1,\hat{0}\}$, and
			\item[2.] $(c_t',c_t'')\in\{(0,\hat{0}),(\hat{0},0)\}$ iff $c_t=0$
		\end{itemize}
		Now, we update $D[i,c]$ for all coloring $c$ and for all undominated weight-profit pairs $(w_{c'},\alpha_{c'})\in D[j,c']$ and $(w_{c''},\alpha_{c''})\in D[k,c'']$ as follows:
		{\scriptsize\[ D[i,c] = \{ (w_{c'}+w_{c''}-\mathlarger{\sum}_{\substack{c_l=1\\l\in \{i_1,\ldots, i_{n_i}\} }} w(v_l),\alpha_{c'}+\alpha_{c''}-\mathlarger{\sum}_{\substack{c_l=1\\l\in \{i_1,\ldots, i_{n_i}\} }}\alpha(v_l)): c' \text{and} c'' \text{are consistent with}~ c   \}\]}\\
		Here, $w(v_l)$ and $\alpha(v_l)$ represents the weight and profit for the vertex $v_l\in X_i$ respectively. 
		
		The complexity of finding consistency of $(c',c'')$ with $c$ for each coloring $c$ we need $\OO(4^{|X_i|})$ operations as from the above conditions 1 and 2, we process four 3-tuples $(c_t,c_t',c_t'')$ with the values $(1,1,1)$,$(\hat{0},\hat{0},\hat{0})$,$(0,0,\hat{0})$,$(0,\hat{0},0)$~\cite{alber2002fixed}. Now, each table $D[j,c']$ and $D[k,c'']$ for the two children of node $i$ contains at most $\min\{s,\alpha(\VV)\}$ pairs.
		The complexity of finding $D[i,c]$ based on consistency of $(c',c'')$ with $c$ is $\OO(4^{|X_i|}\cdot \min\{s^2,{\alpha(\VV)}^2\})$.  Then we remove all such pairs from $D[i,c]$ such that $(w'+w''-\sum_{\substack{c_l=1\\l\in \{i_1,\ldots, i_{n_i}\} }} w(v_l))\leq s~ \forall ~c$. Execution of this condition takes $\OO(3^{|X_i|}\cdot \min\{s,\alpha(\VV)\})$. Then, we remove all dominated and duplicated pairs for $D[i,c]~ \ forall~c$. This removal also takes $\OO(3^{|X_i|}\cdot \min\{s,\alpha(\VV)\})$ time. Join node operations takes total time $\OO(4^{|X_i|}\cdot \min\{s^2,{\alpha(\VV)}^2\})$.
	\end{itemize}
	\parindent=.6cm
	Once we obtain $D[r,c]$ for the root node $r$ of the tree decomposition $T^\chi_r$, and the algorithm outputs the weight-profit pairs such that each of them is a \dos. According to Lemma~\ref{lm:ntd} $T^\chi_r$ has $\OO(n)$ nodes. So, the runtime of the algorithm for $\OO(n)$ nodes is $\OO(4^{|X_i|}\cdot n^{\OO(1)} \min\{s^2,{\alpha(\VV)}^2\})$. We compute our final result  $(\hat{w},\hat{\alpha})$, where $\mathlarger{\max}_{\hat{\alpha}\geq \alpha} ~\{\mathlarger{\bigcup}_{\substack{ (w,\alpha)\in D[r,c]\forall c\\c_t\in\{0,1\}}} (w,\alpha) \}$. This additional step takes $\OO(3^{|X_i|}\cdot \min\{s,\alpha(\VV)\})$ time. As  $|X_i|=\OO(tw)$, thus the total runtime of this algorithm is $\OO(4^{tw}\cdot n^{\OO(1)} \min\{s^2,{\alpha(\VV)}^2\})$.
\end{proof}
\begin{theorem}\label{thm:kdoskp-minkdos-treewidth}
	\kdosknapsack and \minimaldosknapsack can also be solved running time $\OO(4^{tw}\cdot n^{\OO(1)} \min\{s^2,{\alpha(\VV)}^2\})$ by an algorithm parameterized by \tw $tw$, where $s$ is the size of the knapsack and $\alpha(\VV)=\sum_{v\in\VV}\alpha(v)$.
\end{theorem}
We need some modification of the previous algorithm to prove this theorem. 

\begin{corollary}
	\dosknapsack,~\kdosknapsack and \minimaldosknapsack are \FPT parameterized by the \tw.
\end{corollary}

\section{Conclusion}\label{sec:conclusion}

In this paper, we proposed the problem \dosknapsack, which is a variation of the \kp with the association of \dos. We proved that \dosknapsack is strongly \NPC for general graphs and even for the special graph class of bipartite graphs, which are very popular due to their simple structure and properties. We showed that \dosknapsack is weakly \NPC for the star graphs and provided a Pseudo-polynomial time algorithm for trees. Also, we proved some parameterized results on \dosknapsack that it is \WTh-complete parameterized by the solution size, and it is very unlikely that \dos has a sub-exponential time algorithm that has running time $f(k)n^{o(k)}$ parameterized by the solution size $k$. For two parameters \tw and \vcknapsack, we showed that \dosknapsack is Fixed Parameter Tractable. We obtained some similar results for other variants, such as \kdosknapsack and \minimaldosknapsack. We have already discussed the application of this problem. Now, we conclude with the statement that, although this problem is generally intractable, it has enormous practical importance in various real-world applications across different fields. 

\bibliography{references}
\end{document}